\newcommand\bcmdtab{\noindent\bgroup\tabcolsep=0pt%
  \begin{tabular}{@{}p{10pc}@{}p{20pc}@{}}}
\newcommand\ecmdtab{\end{tabular}\egroup}
  \title[A Proof Theoretic Study of Soft CCP]
          {A Proof Theoretic Study of Soft Concurrent Constraint Programming}
  \author[Elaine Pimentel,  Carlos Olarte and Vivek Nigam]
         {         ELAINE PIMENTEL\\
         Universidade Federal do Rio Grande do Norte, Natal, Brazil\\
         \email{elaine.pimentel@gmail.com}
          \and 
         CARLOS OLARTE\\
         Pontificia Universidad Javeriana Cali, Colombia\\
                  Universidade Federal do Rio Grande do Norte, Natal, Brazil\\
         \email{carlos.olarte@gmail.com}
         \and
         VIVEK NIGAM\\
         Universidade Federal da Para\'iba, Jo\~ao Pessoa, Brazil\\
         \email{vivek.nigam@gmail.com}\\
         }
\begin{document}
\label{firstpage}

\maketitle

\begin{abstract}
Concurrent Constraint Programming (CCP) is a simple and powerful model for concurrency 
where agents interact by telling and asking constraints. Since their inception, CCP-languages have been designed for having a strong connection to logic. In fact, 
the underlying constraint system can be built from a suitable fragment of intuitionistic (linear) logic --ILL-- and processes can be interpreted as formulas in ILL. Constraints as ILL formulas fail to represent accurately situations where ``preferences'' (called soft constraints) such as probabilities, uncertainty or fuzziness are present. In order to circumvent this problem, c-semirings have been proposed as algebraic structures for defining constraint systems where agents are allowed to tell and ask soft constraints. Nevertheless, in this case, the tight connection to logic and proof theory is lost. In this work, we give a proof theoretical meaning to soft constraints: they can be defined as formulas in a suitable fragment of ILL with subexponentials (SELL) where subexponentials, ordered in a c-semiring structure, are interpreted as preferences. We hence achieve two goals: (1) obtain a CCP language where agents can tell and ask soft constraints and (2) prove that the language in (1) has a strong 
connection with logic. Hence we keep a declarative reading of processes as formulas while providing a logical framework for soft-CCP based systems. An interesting side effect of (1) is that one is also able 
to handle probabilities (and other modalities) in SELL, by restricting the use of the promotion rule for non-idempotent c-semirings.
This finer way of controlling subexponentials allows for considering more interesting spaces and restrictions, and it opens the possibility of specifying more challenging computational systems.
\end{abstract}

  \begin{keywords}
   Concurrent Constraint Programming, Linear Logic, Soft Constraints
  \end{keywords}

\section{Introduction}

Providing logical and proof theoretic semantics to (fragments of) programming languages not only 
gives a declarative meaning to these languages, but also normally leads to 
 the development of new features allowing more expressive programming constructions to be used. For example,
we investigated recently~\cite{DBLP:conf/concur/NigamOP13} a proof theoretic
specification of the concurrent constraint  programming (\ccp) \cite{saraswat91popl} language introduced in \cite{knight12concur} 
that mentions epistemic (\eccp) and spatial (\sccp) modalities. We used as underlying logical framework  
linear logic with subexponentials (\sell)~\cite{danos93kgc,nigam09ppdp}, showing that our encodings 
faithfully specify \eccp\ and \sccp. More interestingly, this study allowed us to develop  
extensions of \eccp\ and \sccp\ with features not available in \cite{knight12concur}, such as systems with an unbounded number 
of agents for \eccp\ or spaces for \sccp\ and constructs that allow the communication of location names~\cite{olarte13lsfa}.
In this paper we turn our attention to 
computing with \emph{soft constraints}.

Soft concurrent constraint programming (Soft-\ccp) \cite{DBLP:journals/tocl/BistarelliMR06} is an extension of \ccp\
where agents are allowed to tell and ask soft constraints, \ie, constraints with certain level of preference. 
This allows the  modeling of systems with
levels of uncertainty, or those mentioning probabilistic or fuzzy entities. 
However, moving from hard (crisp) constraints   to soft constraints was not followed
by a corresponding logical/proof theoretic characterization of these systems. This is unfortunate
because one of the key motivations of the original \ccp\ was its tight connection to logic and proof theory which 
enabled the proposal of more advanced systems such as its linear version \lcc~\cite{fages01ic}.
The main contribution of this paper is to recover this connection by studying the proof theory of soft constraint systems 
in the form of \sell\ theories.

A key feature of 
\sell\ is that subexponentials  are organized into a pre-order, 
specifying the provability relation among subexponentials. In our previous work~\cite{DBLP:conf/concur/NigamOP13}, 
we used a simple partially ordered set of subexponential names. While this was enough for modeling \eccp\ and \sccp, 
it is not enough to model soft constraints. In this case, we need more sophisticated algebraic structures. 

This paper investigates the proof theory of \sell\ with more involved pre-orders and demonstrates
that it is possible to characterize \emph{soft constraint systems}  by using pre-orders based on general semiring definitions. 
During this investigation, we have also identified variants of soft constraints, namely those based on non-idempotent 
c-semirings, such as probabilistic soft constraints, that do not seem to have a sensible correspondence in \sell.
We thus propose a new proof system, called \sells, with a different promotion rule that allows for such a correspondence.
We prove that \sells\ admits cut-elimination.
We also point out that due to the tight correspondence of soft constraints and proof theory, 
it seems possible to extend \eccp\ and \sccp\ with 
soft constraints. Although we provide some pointers in this paper, its full development is left as future work.

\noindent {\bf Organization.}   Section~\ref{sec:ccp} reviews the main machinery on \ccp\ and soft constraints.
Then in Section~\ref{sec-semiring-sells}, after reviewing \sell, we propose a novel soft constraint system based on 
subexponential signatures proving that it is indeed a sensible \ccp\ constraint system (Theorem~\ref{thm:constraints}). 
Then we also  propose an encoding of Soft-\ccp\ into \sell\ proving its adequacy (Theorem~\ref{thm:adeq-basic}).
Section~\ref{sec:computing} 
gives some examples of the use of the  novel soft constraint system and we point out some limitations of \sell\ to represent non-idempotent soft constraints (e.g.,  probabilistic systems). We
thus propose \sells\ and prove that it admits cut-elimination (Theorem~\ref{thm:sells-cut}). 
Section~\ref{sec:concluding} concludes and presents related works. 
Some missing proofs and auxiliary results are shown in the Appendix.

\section{Concurrent Constraint Programming}
\label{sec:ccp}
Concurrent Constraint Programming (\ccp) \cite{saraswat91popl}  (see a survey in \cite{DBLP:journals/constraints/OlarteRV13}) is  a model for 
concurrency that combines the traditional operational view of process calculi 
with a \emph{declarative} view based on logic.  This allows \ccp\ to benefit from 
the large set of reasoning techniques of both process calculi and logic.   
 
Processes in \ccp\ \emph{interact} with each other by \emph{telling} and \emph{asking} 
constraints (pieces of information) in a common 
store of partial information. The type of constraints
processes may act on  is not fixed but parametric in a constraint system (CS). 
 Intuitively, a
CS  provides a signature from which  constraints
can be built from basic tokens (e.g., predicate symbols), and two basic operations: conjunction ($\sqcup$) and variable hiding ($\exists$). The CS defines also an
\emph{entailment} relation ($\entails$) specifying inter-dependencies
between constraints:  $c\entails d$  means that the
information $d$ can be deduced from the information 
$c$. 
Such systems  can be formalized as a Scott information system as in \cite{saraswat91popl}, 
 or they can be built upon a suitable fragment of logic \eg, as in \cite{fages01ic}. 
 In Section~\ref{sec:semisub}, we will 
 specify such systems  as formulas in  
 intuitionistic linear logic (ILL~\cite{girard87tcs}).

\vspace{-2mm}
\subsection{The language of \ccp\ processes}
In the spirit of process calculi, the language of processes in   \ccp\  is given by a small number of primitive operators or combinators as described below. 
\begin{definition}[Syntax. Indeterminate \ccp\ language \cite{saraswat91popl}]\label{def:syntax-lcc} 
 Processes in \ccp\  are built from constraints in the underlying CS and  the syntax:
\[
P,Q ::= \tellp{c} \mid   
\sum\limits_{i\in I}\whenp{c_i}{P_i} \mid   
P \parallel Q  \mid 
\localp{x}{P}   \mid  
p(\overline{x})
\]
\end{definition}

The process $\tellp{c}$ adds $c$ to the current store $d$ producing the new store $c\sqcup d$.
 Given a non-empty finite set of indexes $I$, the process $\sum\limits_{i\in I}\whenp{c_i}{P_i}$ non-deterministically chooses  $P_j$ for execution if the   store entails $c_j$. The chosen alternative, if any, precludes the others.
This provides a powerful synchronization mechanism based on constraint entailment.   
 When $I$ is a singleton, we shall omit the ``$\sum$'' and we simply write $\whenp{c}{P}$. 
The process $P\parallel Q$ represents the parallel (interleaved) execution of $P$ and $Q$. 
The process 
$\localp {x}{P}$ behaves as $P$ and binds the variable 
$x$ to be local to it. We shall use $\fv(P)$ to denote the set of free variables of $P$. 
 Given a process definition  $p(\overline{y}) \defsymboldelta P$,  
where all free variables of $P$ are in the set of pairwise distinct
variables $\overline{y}$, the process $p(\overline{x})$  evolves into $P[\overline{x}/\overline{y}]$.  A \ccp\ program takes then the form $\cD.P$ where $\cD$ is a set of  process definitions and $P$ is a  process. It is assumed that any process name has a unique definition in $\cD$. \\

\noindent{\bf Structural Operational Semantics (SOS)}
The SOS  of \ccp\ is given by the transition relation $ \gamma \redi \gamma'$  
satisfying the rules in Figure \ref{fig:sos}.
Here we follow the formulation 
 in \cite{fages01ic,DBLP:conf/fsttcs/HaemmerleFS07} where the local variables created by the program appear explicitly in the transition system and parallel composition of agents is identified as a multiset of agents. 
More precisely, a \emph{configuration} $\gamma$ is a triple of the  form  
$(X;  \Gamma ;  c)$, where $c$ is a constraint representing the  store,  $\Gamma$ is a multiset of processes,
and $X$ is a set of hidden 
(local) variables of $c$ and $\Gamma$. The multiset $\Gamma=P_1,P_2,\ldots,P_n$  
represents the process  $P_1 \parallel P_2 \parallel \cdots \parallel P_n$. We shall indistinguishably
use both notations to denote parallel composition. Moreover, processes  are quotiented by a structural congruence relation $\equivP$  satisfying: 
 (STR1) $\localp{x}{P} \equivP \localp{y}{P[y/x]}$ if $y \notin fv(P) $   (alpha conversion);
 (STR2) $P\parallel Q \equivP Q \parallel P$;
 (STR3) $P \parallel (Q \parallel R) \equivP (P \parallel Q)  \parallel R$. We shall write $(X;\Gamma;c) \equiv (X';\Gamma';c')$ whenever $X=X'$, $\Gamma \equivP \Gamma'$ and $c\equiv c'$ (i.e., $c\entails c'$ and $c'\entails c$).

The rules in Figure \ref{fig:sos}  are straightforward realizing the operational intuitions given above: a tell agent $\tellp{c}$ adds $c$ to the current store $d$ (Rule $\rTell$); the process $\sum\limits_{i\in I}\whenp{c_i}{P_i}$ executes $P_j$ if its corresponding guard $c_j$ can be entailed from the store (Rule $\rSum$); a local process $\localp{x}{P}$ adds $x$ to the set of hidden variable $X$ when no clashes of variables occur (Rule $\rLocal$). Observe that Rule $\rEquiv$ can be used, for instance,  to do alpha conversion if the premise 
of $\rLocal$
 cannot be satisfied; the call $p(\vx)$ executes the body of the process definition (Rule $\rCall$).

\begin{definition}[Observable behavior]
Let $\redirex$ be the reflexive and transitive closure of $\redi$. If 
$(X;\Gamma; d) \redirex(X';\Gamma';d')$ and 
$\exists X'. d' \entails c$ we write 
$\Barb{(X;\Gamma;d)}{c}$.
If $X=\emptyset$ and $d=\true$   we simply write   $\Barb{\Gamma}{c}$.
\end{definition}

Intuitively, if $P$ is a process then 
$\Barb{P}{c}$ says that $P$ can reach a store $d$ strong enough to entail $c$, \ie, we can regard $c$ as an output of $P$. Note that in the above definition, the variables in $X'$ are hidden  since the information about  them is not observable.

As processes manipulate the store of constraints, the  CS used dictates much of the behavior of
the system.  For instance, in  \cite{fages01ic} it is shown  that 
by using formulas in a fragment of ILL~\cite{girard87tcs}   as CS, one obtains a more expressive language called Linear Concurrent 
Constraint (\lcc) where ask processes can \emph{consume} information 
from the {store}. 
The same goal is achieved here, but by demonstrating that 
soft constraints in 
\ccp\  can be 
obtained by allowing subexponentials \cite{danos93kgc} in the CS.

\begin{figure}
\resizebox{\textwidth}{!}
{
$
\begin{array}{ccc}
\infer[\rTell]
{(X; \tellp{c},\Gamma;d) \redi (X;\Gamma;c\sqcup d)}
{}
\qquad
\infer[\rSum]{
  (X;\sum\limits_{i\in I}\whenp{c_i}{P_i} ;\Gamma;d) \redi (X;P_j;d)
  }
  {
  d \entails c_j \quad j\in I
  }
 \\\\
\infer[\rLocal]
{(X;\localp{x}{P},\Gamma;d) 
\redi (X\cup\{x\};P,\Gamma;d)
}
{x \notin X \cup fv(d) \cup fv(\Gamma)}
\qquad
\infer[\rCall]
{(X;p(\vy),\Gamma;d) \redi  (X;P\sxy,\Gamma;d) }
{p(\vx) \defsymboldelta P  \in \cD } \\\\
\infer[\rEquiv]
{(X;\Gamma;c) \redi (Y;\Delta;d)}
{(X;\Gamma;c) \equiv (X';\Gamma';c') \redi (Y';\Delta';d') \equiv (Y;\Delta;d)}
\end{array}
$
}
\vspace{-2mm}
\caption{Operational semantics for \ccp\ calculi\label{fig:sos}}
\vspace{-2mm}
\end{figure}



\vspace{-2mm}
\subsection{Soft  Constraint in Concurrent Constraint Programming}
\label{sec:subexp-soft}
\vspace{-2mm}
It is well known that crisp (hard)  constraints fail to represent accurately situations where soft constraints, \ie, preferences, probabilities, uncertainty or fuzziness, are present. In constraint programming  \cite{cp-handbook}, two general frameworks have been proposed to deal with soft constraints: \emph{semiring} based constraints \cite{DBLP:journals/jacm/BistarelliMR97} and \emph{valued} constraints \cite{DBLP:conf/ijcai/SchiexFV95}. Roughly speaking, in both frameworks an algebraic structure defines the operations needed to \emph{combine} soft constraints and 
choosing when a constraint (or solution) is \emph{better} than another. 
In \cite{DBLP:journals/constraints/BistarelliMRSVF99}, it is shown that both frameworks are equally expressive and they are general enough to represent different kind of soft constraints including, e.g.,  fuzzy, probabilistic and weighted constraints. 

In the forthcoming sections, we shall 
build soft constraints from formulas in a suitable fragment of ILL with subexponentials (SELL) where subexponentials are ordered in a semiring structure.  Before that, let us  recall  the framework of semiring based constraints.

\begin{definition}[C-Semiring  \cite{DBLP:journals/jacm/BistarelliMR97}] \label{def:c-semiring}
A c-semiring is a tuple $\langle \cA,+_\cA,\times_\cA, \botA,\topA\rangle$ satisfying:
 (S1) $\cA$ is a set and $\botA,\topA \in \cA$;
 (S2) $+_\cA$ is a binary, commutative, associative and  idempotent operator on $\cA$,
 $\botA$ is its  unit element  and $\topA$ its absorbing element;
 (S3) $\times_\cA$ is a binary, associative and  commutative operator on $\cA$ with 
 unit element $\topA$ and absorbing element
 $\botA$.  Moreover, $\times_{\cA}$  distributes over $+_\cA$.
Let $\leqA$ be defined as $a \leq_{\cA} b$ iff $a+_\cA b = b$. Then,  $\langle \cA, \leqA\rangle$ is a complete lattice  where:
 (S4) $+_\cA$ and $\times_\cA$ are monotone on $\leqA$;
 (S5) $\times_\cA$ is intensive on $\leqA$, \ie, $a \times b \leqA a$.
 (S6) $\botA$ (resp. $\topA$) is the bottom (resp. top) of $\cA$; 
 (S7) $+_\cA$ is the $\lub$ operator. 
If $\times_\cA$ is idempotent, then:
 (S8) $+_\cA$ distributes over $\times_\cA$;
 (S9) $\langle \cA,\leqA \rangle$ is a complete distribute lattice and $\times_\cA$ is its $\glb$.
We shall say that a c-semiring is idempotent whenever its $\times_\cA$ operator is idempotent, and non-idempotent otherwise.
\end{definition}

 Elements in the set $\cA$ (c-semiring values) are used to denote the {\em upper bound of preference degrees}, or simply {\em preference level}, 
where the ``preference'' could be a probability, cost, etc. The $\times_{\cA}$ operator is used to combine values 
while $+_{\cA}$ is used to select which is the ``best'' value   in the sense that $a+_\cA b = b$ iff $a \leqA b$ iff $b$ is ``better'' than $a$.\\

\noindent {\bf Instances of c-semirings}
Before giving some instances of c-semirings, an important clarification is in order. 
In soft constraint logic programming \cite{DBLP:journals/jacm/BistarelliMR97} and soft concurrent constraint programming \cite{DBLP:journals/tocl/BistarelliMR06}, constraints are usually seen as mappings from variable assignments into 
elements in the semiring $\cA$. 
For instance, let $x$ and $y$  
be integer variables and consider the constraint 
$\texttt{leq}(x,y)$ with the usual meaning. Then, using the crisp semiring described below,  the constraint $\texttt{leq}(x,y)$
 maps the tuple $\langle1,2\rangle$ to $\true$ and $\langle 2,1\rangle$ to $\false$. 
  Hence, combining two constraints $c_1$ and $c_2$ means that there are \emph{fewer} possible values  in the variable domains that can
  satisfy both constraints (i.e., the variable-assignment problem is ``harder'' to solve). 
 In this paper  we adhere to the tradition of \ccp-languages and constraint systems~ \cite{saraswat91popl,BoerPP95} where constraints are seen as tokens  of (partial) information. Hence, when the token  $\texttt{leq}(x,y)$ is added to the current store $d$,  we are not interested in \emph{solving} the constraint problem $d\sqcup \texttt{leq}(x,y)$ (i.e., find the values for $x$ and $y$ that satisfy such constraint). Instead, we see the  addition of  $\texttt{leq}(x,y)$ to $d$ as   \emph{increasing monotonically} the information we have about $x$ and $y$ in $d$. For instance, that information can be used to deduce (via the entailment relation) that $\texttt{leq}(x,y+1)$ also holds. Accordingly, in the context of soft constraints, adding a constraint $c$ with a \emph{preference level} $a\in \cA$, denoted as 
 $\softC{c}{a}$, will mean that $c$ is \emph{believed}   with 
 a probability, preference, costs, etc.  $a$. The higher the value of $a$ the more the information we add to the store. 
 
 Let us now   give some well-known instances of c-semirings. 
Let $c_1$ and $c_2$ be constraints. 
The c-semiring $S_c = \langle \{\true,\false\}, \vee,\wedge,\false,\true \rangle$ models  
{\bf crisp}  (hard) constraints. 
Then, $\softC{c_1}{\false}$ 
means that the agent \emph{does not believe} in $c_1$ and hence, regardless the preference level   of $c_2$, the conjunction of $c_1$ and $c_2$ must be also assigned a preference level of $\false$. The {\bf fuzzy}  c-semiring $S_F = \langle [0,1], max, min, 0,1\rangle$ 
allows  for fuzzy constraints 
that  have an  associate   preference level  in the real interval $[0,1]$   where 1 represents the best value.
Then, if $\softC{c}{0.2}$ and $\softC{d}{0.7}$
are in the store, we can say that $d$ is believed with a ``better'' (higher) preference level (wrt $+_\cA$) than $c$. 
From that store we can also deduce that  the conjunction $c\sqcup d$ is believed with preference level $0.2$ (using the $\times_\cA$ operator to combine 0.7 and 0.2). In  a {\bf probabilistic} setting,
 a   constraint $c$ is annotated with its probability of existence where probabilities are supposed to be independent (\ie, no
 conditional probabilities). 
 This can be modeled with the c-semiring $S_P = \langle[0,1], max, \times, 0 ,1 \rangle$. Then, if $\softC{c_1}{0.2}$ and $\softC{c_2}{0.7}$ are in the store,  the probability of deducing $c_1 \sqcup c_2$  is $0.14$.  In {\bf weighted} constraints there is an accumulate cost that can be  computed  with the c-semiring $S_w = \langle\cR^-,max,+,-\infty,0 \rangle$, where $0$ means no cost. 
 Then, from a store containing  $\softC{c_1}{-2}$ 
 and  $\softC{c_2}{-7}$ we can deduce $\softC{c_1 \sqcup c_2}{-9}$. 
We note that the first two c-semirings are idempotent (\ie, $\times_\cA$ idempotent), while the last two are not.

\section{Soft-\ccp\ as Theories in Linear Logic with Subexponential}
\label{sec:semisub}

In this section we build soft constraints from formulas in a suitable fragment of 
 intuitionistic linear logic  (ILL) with subexponentials  \cite{danos93kgc,nigam09ppdp} 
(\sell) where subexponentials are ordered in a c-semiring structure. 
By doing that, we achieve two goals: (1) obtain a \ccp\ language where agents can tell and ask soft constraints and (2) prove that the language in (1) has a strong connection with logic (Section \ref{sec:logical-reading}). We then  keep a declarative reading of processes as formulas and provide a logical framework
for soft-CCP based systems.
This last goal is remarkable.
In fact, the beauty of CCP relies on the fact that it is simple, yet powerful, and with a strong connection to logic, hence correct.

\vspace{-2mm}
\subsection{Linear Logic with Subexponentials}
\ \sell\ shares with intuitionistic linear logic  all 
its connectives except the exponentials:
instead of having a single pair of exponentials $\bang$ and $\quest$, \sell\ may
contain as many \emph{subexponentials}~\cite{danos93kgc,nigam09ppdp} as needed.


Figure~\ref{fig:ll} presents the introduction rules of the 
fragment of linear logic that will be used in order to 
build soft constraint system ($\otimes,\exists,\one,\forall, \top$) and to give meaning to processes  ($\with,\lolli$). 
Note that formulas are not always allowed to contract and weaken: this is 
controlled in linear logic by the use of the {\em exponentials} $\bang$
and $\quest$. In \sell, this control is finer since it is possible to specify which subexponentials behave classically or not. 

\begin{figure}
\[
 \infer[\tensor_L]{\Gamma, F \tensor H \lra G}
{\Gamma, F, H \lra G} 
\quad 
\infer[\tensor_R]{\Gamma_1, \Gamma_2 \lra F \tensor H}
{\Gamma_1 \lra F & \Gamma_2 \lra H}
\quad
\infer[{\one}_L]{\Gamma, \one \lra G}
{\Gamma \lra G}
\quad 
\infer[{\one}_R]{ \lra \one}{}
\quad 
\infer[\top_R]{ \Gamma\lra \top}{}
\]

\resizebox{\textwidth}{!}{
$
\infer[\lolli_L]{\Gamma_1, \Gamma_2, F \lolli H \lra G}
{\Gamma_1 \lra F & \Gamma_2, H \lra G}
\quad 
\infer[\lolli_R]{\Gamma \lra F \lolli H}{\Gamma, F \lra H}
\quad 
\infer[\with_{L_i}]{\Gamma, F_1 \with F_2 \lra G}
{\Gamma, F_i\lra G} 
\quad 
\infer[\with_R]{\Gamma \lra F \with H}
{\Gamma \lra F & \Gamma \lra H}
$}

\[
\infer[\exists_L]{\Gamma, \exists x. F \lra G}
{\Gamma, F[e/x] \lra G}
\quad 
\infer[\exists_R]{\Gamma \lra \exists x.G}
{\Gamma \lra G[t/x]}
\quad
\infer[\forall_L]{\Gamma, \forall x. F\lra G}
{\Gamma, F[t/x] \lra G}
\quad
\infer[\forall_R]{\Gamma \lra \forall x. G}
{\Gamma \lra G[e/x]}
\]
%
\caption{A fragment of the LL  introduction rules. Here $e$ is a fresh variable and $t$ is a term.}
\vspace{-2mm}
\label{fig:ll}
\end{figure}

Formally, a \sell\ system is specified by a \emph{subexponential signature} $\Sigma = \tup{I, \preceq,U}$, 
where $I$ is a set of labels, $U \subseteq I$  specifying which
subexponentials allow both weakening and contraction, and $\preceq$ 
is a pre-order among the elements of $I$.
 We shall use $a,b,\ldots$
 to range over elements in $I$ and we will assume that $\preceq$
is upwardly closed with respect to $U$, \ie, if $a \in U$ and $a \preceq
b$, then $b \in U$. For a given such subexponential signature, $\sell_\Sigma$ is 
the system obtained by substituting the linear logic exponential $\bang$ by the subexponential $\nbang{a}$ for each $a\in I$, and by adding to the rules in Figure~\ref{fig:ll} the following inference rules:

\noindent - 
 for each $a \in I$ (dereliction and the promotion rules):
  \[
  \infer[\nbang{a}_L]{\Gamma, \nbang{a} F \lra G }{\Gamma,F \lra G }
  \qquad 
  \infer[\nbang{a}_R \textrm{, provided $a \preceq a_i$ for $1 \leq i \leq n$.}]{\nbang{a_1} F_1, \ldots, \nbang{a_n} F_n \lra \nbang{a} F }
  {\nbang{a_1} F_1, \ldots, \nbang{a_n} F_n \lra  F }
 \]
\noindent -  for each $b \in U$ (structural rules):
 \[
  \infer[W]{\Gamma, \nbang{b} F \lra G }{\Gamma \lra G }
  \qquad
  \infer[C]{\Gamma, \nbang{b} F \lra G }{\Gamma, \nbang{b} F, \nbang{b} F \lra G }
 \]
In this paper we will not use the $\nquest{a}$ subexponential, since the specifications will be within the {\em minimal} setting of \sell. We would like to stress out that this choice do not affect the expressiveness of the framework, as pointed out in~\cite{DBLP:conf/csl/Chaudhuri10}.
Observe that provability is preserved {\em downwards} i.e. the sequent $\Gamma \lra \nbang{a} P$ is provable in $\sell_\Sigma$, then 
so is the sequent $\Gamma \lra \nbang{b} P$ for all $b\preceq a$. We shall elide the signature $\Sigma$ whenever it is not important or clear from the context.




Subexponentials greatly increase
the expressiveness of the system when compared to linear logic. The key difference  is that  while linear logic has only seven logically
distinct prefixes of $\bang$ and $\quest$ (e.g., $\bang F$, $\bang\quest\ F$, $\quest\bang F$, etc) \cite{danos93kgc}, \sell\ allows for an
unbounded number of such prefixes (e.g., $\nbang{a}{}\nquest{b}F,\nbang{b}{}\nquest{a}F$, etc). In fact,  in~\cite{DBLP:conf/concur/NigamOP13}, we showed that  by using
different prefixes 
it is possible  to interpret
subexponentials in more creative ways, such as temporal
units  or spatial and epistemic modalities.

  $\sell$  enjoys good proof theoretic properties. For instance, \cite{danos93kgc} proved that 
  $\sell$ admits cut-elimination.
Moreover, \cite{nigam09ppdp} proposed a sound and complete focused proof system~\cite{andreoli92jlc} for 
$\sell$. 
In this work, however,  we will use an unfocused version of $\sell$, since extending focusing to the  \sells\ system  (see Section~\ref{sells}) is a non trivial task.

\vspace{-2mm}
\subsection{C-semiring as Subexponentials Signatures}\label{sec-semiring-sells}
In~\cite{DBLP:conf/concur/NigamOP13}
we studied the logical meaning
of \ccp\ processes    as \sell\ formulas. For that, we assumed that  the underlying constraint system had a logical structure and  we required  simple pre-orders as subexponential signatures. Here   we go in the opposite direction: assuming 
that \ccp\ processes can be endowed with a logical meaning, we 
propose a logical framework for building {\em soft} constraints, thus recovering the logical reading of Soft \ccp\ systems. 
 This requires a more involving  algebraic 
structure in the subexponential signature, as follows.
%
%
\begin{definition}[Soft Constraint System (SCS)]\label{def:soft-cons}
Let $S = \langle \cA,+_\cA,\times_\cA, \botA,\topA \rangle$ be a c-semiring with $\leqA$ the order induced by $+_\cA$;  $\cP$ be a first order signature
;   $\Sigma = \tup{\cA, \leqA,\cA}$  be a subexponential signature; and  $\cC$ be a set of  \sell\
formulas built from the syntax:
 \[
 \begin{array}{lll}
{\mathbf C}& ::= & \one \mid   C \otimes C\mid \exists x.(C) \mid \nbang{a}A \mid \nbang{a}(\nbang{a}A_1 \otimes \cdots \otimes \nbang{a}{A_n})
\end{array}
 \]
 where $a\in\cA$ and $A, A_i$ are atomic formulas  (\ie,  predicate symbols in $\cP$ applied to terms).
  Elements in $\cC$, with typical elements $c,d$, are called constraints. Let   $\Delta=\{\delta_1,...,\delta_n\}$   be a (possibly empty) set of non-logical axioms of the form $\forall\vx_i. (c _i\limp d_i)$ where all free variables in  $c_i$ and $d_i$ are in $\vx_i$. 
A soft constraint system SCS
is a structure 
 $\langle\cA,  \cC, \entails  \rangle$ 
 where   $d \entails c$ iff the sequent 
 $\nbang{\topA}{\delta_1},..., \nbang{\topA}{\delta_n}, d \lra c$ is provable in SELL. 
\end{definition}

We shall call \emph{pre-constraints} 
formulas of the shape $A_1 \otimes \cdots\otimes A_n$ or  an atom $A$. 
As usual in the specification of constraint systems as formulas in a given logic, 
the previous definition built constraints from the
the empty store  ($\one$); conjunction of constraints ($\otimes$); and existential quantification of constraints. In our case, additionally, a constraint can be   a formula $F$ of the form $\nbang{a}{A}$ or  $\nbang{a}({\nbang{a}A_1 \otimes \cdots \otimes \nbang{a}{A_n}})$ where 
$A, A_i$ are atomic formulas. Roughly, $F$ means that the pre-constraint $A$ (or $A_1 \otimes \cdots \otimes A_n$) was added to the store with an upper bound preference 
degree $a$. Note that $a$ is a c-semiring value and, according to the previous definition, it is a subexponential. Moreover, due to the signature $\Sigma$, all the subexponentials are unbounded which means that soft constraints cannot be removed from the store. In what follows, we shall 
write $\softC{A}{a}$  instead of $\nbang{a}{A}$;   $\softC{A_1 \otimes \cdots \otimes A_n}{a}$
instead of $\nbang{a}({\nbang{a}A_1 \otimes \cdots \otimes \nbang{a}{A_n}})$; and   $\softCT{F}$ instead of $\softC{F}{\top_{\cA}}$. 

Now  we shall show that  our construction is indeed an instance of the general definition of CS 
as cylindric algebras in  \cite{saraswat91popl,BoerPP95}. This guarantees that all the machinery developed for \ccp\ calculi can be used also when considering  programs with the  SCS in Definition \ref{def:soft-cons}. 
Roughly, a Cylindric Constraint System is a structure  $\langle \cC,\leq,\sqcup,\one,\zero,{\it Var}, \Exists,
D \rangle$ where $\cC$ is a set of tokens (constraints); $c\leq d$ iff $d \entails c$; $\Exists$ is a cylindrification operator that models hiding of variables; and $D\subseteq \cC$ is the set of diagonal elements of the form $d_{xy}$  that can be thought of as the equality $x=y$. In the  \ref{sec:cylindric}  the reader may find the complete definition of these systems and the proof of the theorem below. We note 
diagonal elements (and axioms in Definition \ref{def:soft-cons}) 
are marked with the   subexponential $\topA$.  Hence, the sequent $\softCT{d_{xy}} \lra \softC{d_{xy}}{a}$ is provable for any   $a \in \cA$. 
Intuitively, this means that axioms and diagonal elements are \emph{available} (and can be used) under any preference level. 

%

\begin{theorem}[Constraint System]\label{thm:constraints}
Let  $\mathbb{C} = \langle\cA,  \cC, \entails \rangle$ be as in Definition~\ref{def:soft-cons}. Then, the structure $\langle \cC,\leq,\otimes,\one,\zero, Var, \exists, D \rangle $ is   a cylindric constraint system where $D = \{\nbang{\top_{\cA}}(d_{xy}) \mid  x,y\in Var\}$  and $c\leq d$ iff $d\entails c$. 
\end{theorem}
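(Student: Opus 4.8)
The plan is to reduce every clause of the definition of a cylindric constraint system to a provability fact about $\sell_\Sigma$, exploiting that $\entails$ is \emph{defined} as provability of a sequent whose left context is the constraint together with the axiom context $\nbang{\topA}{\delta_1},\ldots,\nbang{\topA}{\delta_n}$. First I would check that $\entails$ is a pre-order: reflexivity $c \entails c$ follows from the generalized identity ($\eta$-expansion) lemma for $\sell$, and transitivity follows from cut-admissibility of $\sell$ (proved in \cite{danos93kgc}), after contracting the duplicated copies of the axiom context, which is licit because $\topA \in \cA = U$. Quotienting $\cC$ by mutual entailment $\equiv$ then yields the partial order $\langle \cC, \leq\rangle$ with $c \leq d$ iff $d \entails c$.

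The order-theoretic core is a structural lemma: \emph{weakening and contraction are admissible for every constraint} $c \in \cC$. I would prove this by induction on the grammar of $\mathbf{C}$: the cases $\one$, $C \otimes C$ and $\exists x.(C)$ are discharged by $\one_L$, $\otimes_L$ and $\exists_L$ together with the induction hypothesis, while the two subexponential cases $\nbang{a}A$ and $\nbang{a}(\nbang{a}A_1 \otimes \cdots \otimes \nbang{a}A_n)$ are discharged directly by the structural rules $W$ and $C$, available since $a \in \cA = U$. From this lemma, $\one$ is the bottom element ($c \entails \one$ for every $c$, by weakening followed by $\one_R$), and $\otimes$ is the least upper bound: the inequalities $c \leq c\otimes d$ and $d \leq c\otimes d$ are immediate from $\otimes_L$ plus weakening, whereas the universal property (if $c \leq e$ and $d \leq e$ then $c\otimes d \leq e$) follows from $\otimes_R$ after contracting the shared antecedent $e$ and the axiom context. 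The top element $\zero$ is identified with the inconsistent store (the additive $0$, adjoined as top), which entails every $c$ via its left rule, closing the lattice structure.

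For the cylindrification operators I use the $\sell$ quantifier $\exists$ to realize each $\exists_x$. The axiom $\exists_x c \leq c$ amounts to $c \entails \exists x.c$, proved by $\exists_R$ with witness $x$ reducing to an identity; monotonicity $c \leq d \Rightarrow \exists_x c \leq \exists_x d$ follows from $\exists_L$ (introducing a fresh eigenvariable) and $\exists_R$, using that the axioms in $\Delta$ are closed and hence invariant under the eigenvariable substitution; commutation $\exists_x\exists_y c \equiv \exists_y\exists_x c$ is a standard provable equivalence; and $\exists_x(c \otimes \exists_x d) \equiv \exists_x c \otimes \exists_x d$ holds because the subformula $\exists_x d$ has no free occurrence of $x$ and can therefore be moved across $\otimes$, a provable $\sell$ equivalence. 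Each of these is a short derivation once weakening and contraction of constraints are in place.

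The delicate part, and the step I expect to be the main obstacle, is the treatment of the diagonal elements $D = \{\nbang{\topA}(d_{xy})\}$ and the substitution axiom. The equalities $d_{xx} \equiv \one$ and $d_{xy} \equiv \exists_z(d_{xz}\otimes d_{zy})$ (for $z \neq x,y$) together with the substitution law are not purely logical; they must be supplied as closed non-logical axioms $\delta_i$ of $\Delta$ (reflexivity, composition and substitutivity of equality), all marked with $\topA$, and then shown provable in $\sell$. The real subtlety is that constraints are \emph{graded} by preference levels, so a diagonal stored at level $\topA$ must license substitution \emph{at every} preference level $a \in \cA$; here I rely on the downward-preservation property noted earlier ($\softCT{d_{xy}} \entails \softC{d_{xy}}{a}$, since $a \preceq \topA$), which makes the equality uniformly available, together with careful bookkeeping of eigenvariable freshness in the $\exists_z$ step to avoid variable capture. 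Once these diagonal axioms are verified, the structure $\langle \cC,\leq,\otimes,\one,\zero, Var,\exists, D\rangle$ satisfies all the clauses of \cite{saraswat91popl,BoerPP95}, yielding the theorem.
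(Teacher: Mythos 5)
Your proposal is correct and follows essentially the same route as the paper's own proof: reduce each clause of the cylindric-algebra definition to a provable $\sell$ sequent, using that every subexponential in $\cA$ is unbounded (so constraints admit weakening and contraction, i.e.\ are ``classical'') and that the equality axioms and diagonal elements are marked with $\topA$ and hence available at every preference level. You simply make explicit several steps the paper leaves as ``easy'' (the induction giving the structural lemma, cut-admissibility for transitivity of $\entails$), which is a faithful elaboration rather than a different argument.
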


\vspace{-3mm}
\subsection{Logical Reading of Processes}
\label{sec:logical-reading}

In ~\cite{DBLP:conf/concur/NigamOP13} we extended the results in  \cite{fages01ic} and we showed that \ccp\ processes have a strong connection with ILL: operational steps matches 
{\em exactly} focused logical steps~\cite{nigam09ppdp}. We also showed that such characterization extends to  various \ccp\ calculi like epistemic,  spatial and timed systems. 

%
%

Unlike the results in~\cite{DBLP:conf/concur/NigamOP13}, 
  the encoding here considers 
  non-determinism and we 
  do not need the extension of \sell\ with families
  or quantification over subexponentials,
as the  systems of soft constraints do not mention nested modalities. It seems possible, however, to include these modalities to obtain Soft-\ccp\ systems that mention spatial or temporal
modalities (see Section \ref{sec:concluding}).



Assume a SCS 
and let $\Sigma'=
 \tup{\cA \cup \{\pfamily,\dfamily, \ufamily\}, \preceq, \cA \cup \{\ufamily\}}
$ be a subexponential signature  where, for any $a,b\in\cA$, $a\preceq b$ iff $a\leq_{\cA} b$, and $a,\pfamily,\dfamily,\ufamily$ are unrelated wrt  $\preceq$.
Observe that $a,\ufamily  \in U$ while $\pfamily,\dfamily\notin U$. 
Intuitively, the subexponential $\pfamily$ is used
to mark processes; 
$\ufamily$ marks process definitions; and
 $\dfamily$ marks calls  $p(\vx)$ whose
definition may be unfolded. 
We will build the subexponential signature $\Sigma$  from $\Sigma'$, as the
completion of $\Sigma'$ to a c-semiring. This 
is easily achieved by adding two distinguished elements: $\bottom_c,\top_c$ such that $\bottom_c\leq_{\cA}\botA$;
$\bottom_c\leq_{\cA} \pfamily,\dfamily,\ufamily$;
$\topA\leq_{\cA}\top_c$; and 
$\pfamily,\dfamily,\ufamily \leq_{\cA} \top_c $.  Then, for example, $\pfamily\times_{\cA} a=\bottom_c$ and $\pfamily+_{\cA} a=\top_c$ for any $a\in\cA$. 

Now we show how processes can be given a logical meaning as formulas in \sell. 

\begin{definition}[Encoding of processes, non-logical axioms and process definitions]\label{def:se-csystem}
For any process $P$,  $\pEnc{P}{}$ is defined recursively as:
\[
\begin{array}{l@{\qquad}l}
\bullet~\pEnc{\tellp{c}}{} = \nbang{\pfamily} c &
\bullet~\pEnc{\sum\limits_{i\in I}\whenp{c_i}{P_i}}{} = \nbang{\pfamily} \bigwith_{i\in I} (c_i\limp \pEnc{P_i}{})\\
\bullet~\pEnc{\localp{\vx}{P}}{} =  \nbang{\pfamily}( \exists \vx. \pEnc{P}{}) &
\bullet~\pEnc{P_1,...,P_n}{} =\pEnc{ P_1}{}   \ot ... \ot   \pEnc{P_n}{} \\
\bullet~\pEnc{p(\vx)}{} =\nbang{\dfamily} p(\vx) & 
\end{array}
\]
Recall that non-logical axioms are encoded as formulas of the form $\nbang{\top_{\cA}}(\forall\vx (d \limp c))$ (see Def. \ref{def:soft-cons}). A 
process definition of the form $p(\vx)\defsymboldelta P$
is encoded as $
\nbang{\ufamily}[ \forall \vx.(\nbang{\dfamily}p(\vx) \limp \pEnc{P}{})].
$
\end{definition}  
We can now state the adequacy theorem, where
$\enc{\Psi}$ represents the set of \sell\ formulas encoding
the set of 
 process definitions $\Psi$. The proof is in  \ref{sec:adequacy}. 

\begin{theorem}[Adequacy]
\label{thm:adeq-basic}
Let $P$ be a process,
$(\cA,\cC,\entails)$ be  a SCS 
with a (possible empty) set of non-logical axioms $\Delta$ and   $\Psi$ be a set of process definitions.
Then $\Barb{P}{c}$ iff $ \Delta, 
  \enc{\Psi}, \pEnc{P}{} \lra
c \otimes \top$.
\end{theorem}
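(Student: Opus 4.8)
The plan is to prove the adequacy theorem (Theorem~\ref{thm:adeq-basic}) by establishing a tight operational correspondence between \ccp\ reductions and \sell\ derivations, in both directions. Since the theorem is stated as a ``barbed'' observation $\Barb{P}{c}$ iff provability of $\Delta, \enc{\Psi}, \pEnc{P}{} \lra c \otimes \top$, and since $\Barb{P}{c}$ is defined via the reflexive--transitive closure $\redirex$, the natural strategy is to first prove a one-step lemma relating a single transition $(X;\Gamma;d) \redi (X';\Gamma';d')$ to a single (or bounded block of) logical inference(s), and then to lift this to the transitive closure by induction on the length of the reduction sequence. The use of the $\top$ on the right of the turnstile is crucial: it absorbs the residual process encodings and leftover store that are not needed to entail $c$, so that provability of $\cdots \lra c \otimes \top$ exactly captures ``the store reachable from $P$ is strong enough to entail $c$'' rather than ``the store is exactly $c$''.

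First I would make precise the invariant that is preserved along a reduction. A configuration $(X;\Gamma;d)$ should be encoded as a sequent whose antecedent contains $\Delta$ (the $\nbang{\topA}$-marked axioms), $\enc{\Psi}$ (the $\nbang{\ufamily}$-marked definitions), the store $d$, and the process multiset encoding $\pEnc{\Gamma}{} = \pEnc{P_1}{} \ot \cdots \ot \pEnc{P_n}{}$, with the local variables $X$ handled by $\exists$-introductions. I would then state and prove the key one-step simulation lemma: each operational rule in Figure~\ref{fig:sos} corresponds to a specific pattern of introduction rules. Concretely, $\rTell$ corresponds to dereliction $\nbang{\pfamily}_L$ on $\pEnc{\tellp{c}}{} = \nbang{\pfamily}c$, moving $c$ into the store; $\rSum$ corresponds to a $\with_{L_i}$ choosing branch $j$ followed by $\lolli_L$, where the left premise $d \lra c_j$ is discharged precisely by the entailment hypothesis $d \entails c_j$ (this is where Definition~\ref{def:soft-cons}'s provability-based entailment feeds in directly); $\rLocal$ corresponds to $\exists_L$ with the freshness side condition matching $x \notin X \cup fv(d) \cup fv(\Gamma)$; and $\rCall$ corresponds to using the $\nbang{\ufamily}$-marked definition via contraction/dereliction, an $\forall_L$ instantiation, and a $\lolli_L$ with the left premise $\nbang{\dfamily}p(\vx) \lra \nbang{\dfamily}p(\vx)$ closed by dereliction. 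The $\rEquiv$ rule is absorbed by the commutativity/associativity of $\ot$ and the $\exists$-equivalences, so structural congruence does not affect provability.

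The soundness direction (``$\Barb{P}{c}$ implies provable'') would then go by induction on the length of $\redirex$: the base case $(X;\Gamma;d) \redirex (X;\Gamma;d)$ with $\exists X.\,d \entails c$ reduces to building a derivation whose antecedent entails $c \otimes \top$ by weakening all the process formulas into $\top$ via $\top_R$ together with $\ot_R$, and the inductive step prepends the logical block corresponding to the simulated operational step. For completeness (``provable implies $\Barb{P}{c}$'') I would argue in the reverse direction, analyzing a cut-free proof (justified by cut-elimination for \sell, cited in the excerpt) of $\Delta, \enc{\Psi}, \pEnc{P}{} \lra c \otimes \top$; here the restricted constraint grammar of Definition~\ref{def:soft-cons} and the marking discipline ($\pfamily$ non-weakenable, $\ufamily$ and the semiring subexponentials weakenable) ensure that every applicable left rule on a process-encoding formula corresponds to a genuine operational step, so the proof can be read as a schedule of reductions reaching a store that entails $c$.

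The main obstacle I expect is the completeness direction, specifically controlling the shape of arbitrary cut-free \sell\ proofs so that they correspond to \emph{real} reduction sequences rather than spurious logical interleavings. In an unfocused calculus, the inference rules can be applied in many orders, and rules acting on $\Delta$, on the definitions in $\enc{\Psi}$, and on process formulas can be intermixed freely; I would need a permutability/normalization argument (morally a focusing argument, though the excerpt commits to an unfocused presentation) showing that any cut-free proof can be rearranged into blocks each matching one operational rule, with the promotion rule $\nbang{a}_R$ and the $\preceq$-ordering on $\cA$ guaranteeing that preference levels are handled coherently. A secondary delicate point is the bookkeeping for local variables and freshness when translating $\exists_L$'s eigenvariable condition into the $\rLocal$ side condition, and ensuring the $\nbang{\dfamily}/\nbang{\ufamily}$ discipline makes each definition usable arbitrarily often (via contraction on the weakenable $\ufamily$) exactly as $\rCall$ permits. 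I would isolate these as auxiliary lemmas and relegate the routine rule-by-rule verifications to the appendix, as the excerpt indicates.
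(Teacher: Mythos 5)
Your overall strategy --- a one-step operational/logical correspondence lifted by induction on the reduction length, with the $\top$ on the right absorbing residual process encodings and leftover store, soundness by simulating each SOS rule with a fixed block of inferences, and completeness by analyzing cut-free proofs --- is the same template the paper follows, and your rule-by-rule assignments (dereliction of $\nbang{\pfamily}$ for tell, $\with_{L_i}$ plus $\lolli_L$ for the guarded choice with the left premise discharged by $d \entails c_j$, $\exists_L$ for local, and use of the $\nbang{\ufamily}$-marked definition for calls) match the paper's. The genuine difference is in how the obstacle you correctly single out --- taming the order of rule applications in the completeness direction --- gets discharged. You propose a hand-rolled permutability/normalization argument over the unfocused calculus; the paper instead directly invokes the already-established \emph{focused} proof system for \sell\ (from Nigam--Miller and the authors' earlier work) and observes that one focused phase corresponds exactly to one operational step: the all-positive normal form of constraints forces them to be decomposed entirely within one negative phase, and focusing on an ask's guard confines the usable context to pre-constraints and non-logical axioms because the semiring subexponentials are unrelated to $\pfamily$, $\dfamily$, $\ufamily$. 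This buys adequacy at the strongest level (on derivations) essentially for free, whereas your route would amount to reproving a large part of the focusing theorem. Note also that the paper must additionally cover the \sells\ case, where no focused system is available, and there it uses exactly the unfocused, structure-controlling style you sketch --- made concrete as two lemmas: sequents with a linear $\nbang{\pfamily}$- or $\nbang{\dfamily}$-marked process formula on the left cannot prove a constraint, and $\nbang{\ufamily}$-marked definitions can only be weakened in such proofs. If you pursue the unfocused route, those two lemmas are the precise form your ``marking discipline'' argument should take.
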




\section{Computing with Soft Constraints}
\label{sec:computing}
\vspace{-2mm}

In this section we show how to compute with soft constraints. We distinguish two classes of SCS  according to the underlying c-semiring:  idempotent and non-idempotent. 
First, from the pre-order induced by c-semiring (Definition~\ref{def:soft-cons}), we can rephrase the
side-condition of \sell's promotion rule for SCS  as follows:
\begin{equation}\label{eq:glb}
  \infer[\nbang{a}_R \textrm{, provided $a \leq_{\cA} glb(a_1,...,a_n)$}]{\nbang{a_1} F_1, \ldots, \nbang{a_n} F_n \lra \nbang{a} F }
  {\nbang{a_1} F_1, \ldots, \nbang{a_n} F_n \lra  F }
\end{equation}

\vspace{-4mm}
\subsection{Idempotent soft constraints}

It turns out that, in an idempotent c-semiring, $a\times_{\cA} b = glb(a,b)$. Hence the side-condition in (\ref{eq:glb})
is equivalent to 
\begin{equation}\label{eq:idem}
a \leq_{\cA} a_1 \times_{\cA}...\times_{\cA} a_n
\end{equation} 

For illustrating better how the promotion rule is
used in idempotent systems, consider
the  fuzzy c-semiring $S_F = \langle [0,1], max, min, 0,1\rangle$
and its corresponding SCS as in Definition \ref{def:soft-cons}. Let $c,d$ be pre-constraints and consider    $T = P \parallel Q \parallel R \parallel S$ where:
\[
\begin{array}{l l l }
P = \tellp{\softC{c}{0.7}} \parallel \tellp{\softC{d}{0.2}} & \qquad & 
Q = \whenp{\softC{c }{0.3}}{Q'} \\
R = \whenp{\softC{c\otimes d}{0.5}}{R'} & \qquad & 
S = \whenp{\softC{c\otimes d}{0.2}}{S'} \\
\end{array}
\]

From the initial store $\one$, we observe the following transitions:\\

\resizebox{\textwidth}{!}
{
$
\begin{array}{lll}
\conf{\emptyset}{T}{\one}~  \redi^* ~ \conf{\emptyset}{Q\ \parallel R \parallel S\ }{\softC{c}{0.7} \otimes \softC{d}{0.2}} ~ \redi^* ~ \conf{\emptyset}{Q'\parallel R \parallel S'}{\softC{c}{0.7} \otimes \softC{d}{0.2}}
\end{array}
$
}
\\
The ask   $Q$ can proceed since the sequent $\softC{c}{0.7} \lra \soft{c}{0.3} $ is provable. 
Furthermore, since the sequent 
$\softC{c}{0.7}, \softC{d}{0.2} \lra \softC{c \otimes d}{0.2}$ is also provable, $S$ can evolve into $S'$. Finally,   $R$ remains blocked since  the sequent $\softC{c}{0.7}, \softC{d}{0.2} \lra \softC{c \otimes d}{0.5}$ is not provable:
 introducing $\nbang{0.5}$ on the right implies weakening the formula $\softC{d}{0.2}$ on the left.  That is, the process $P$ adds the information that $c$ (resp. $d$) is \emph{preferred} with a level of $0.7$ (resp. $0.2$). 
Hence the pre-constraint $c\otimes d$ can be deduced only with a preference level less or equal to $0.2$. 

\subsection{Non-idempotent soft constraints}\label{sells}

 It is well known that some of the interesting properties of the c-semiring framework for constraint programming do not  hold for non-idempotent c-semirings (see Section \ref{sec:r-work}). 
In our framework, if   $\times_{\cA}$ is not idempotent then  it may be the case that  $a\times_{\cA} b <_{\cA} glb(a,b)$; hence
the side conditions in (\ref{eq:glb}) and (\ref{eq:idem}) are no longer 
equivalent and therefore the promotion rule in (\ref{eq:glb}) does not seem to be adequate anymore.

For an example, let $S_P = \langle[0,1], max, \times, 0 ,1 \rangle$ be the probabilistic c-semiring and  $T$ be the process as above. We notice that under this SCS, the sequent $\softC{c}{0.7}, \softC{d}{0.2} \lra \softC{c \otimes d}{0.2}$ is provable (as in the case of the Fuzzy c-semiring) and then, the process $S$ can proceed.
This does not fit to our intuition that 
from $\softC{c}{0.7} \otimes \softC{d}{0.2}$
we can only entail $c\otimes d$ with a probability less or equal to $0.14$. This undesired behavior comes  with no surprise since the provability relation takes into account the ordering $\leq_\cA$  induced by the $+_\cA$ operator  but it does not   ``combine'' information  with the $\times_\cA$ operator. 
%
%


Fortunately, it is possible to redefine the promotion rule  
in order to specify the ``combination'' of c-semiring values when non-idempotent c-semirings are considered. 
We define the system \sells\ from \sell, replacing the side condition of the promotion rule. 
\begin{definition}[\sells\ system]\label{def:sells}
Let $\Sigma$ be a subexponential signature as in Definition~\ref{def:soft-cons}. 
The  $\sells_\Sigma$ system shares with \sell\ all the rules but the  promotion rule, which is defined as
\[
  \infer[\nbang{a}_{R_S} \textrm{, provided $a\leq_{\cA} a_1\times_{\cA}\ldots\times_{\cA} a_n$}]
  {\nbang{a_1} F_1, \ldots, \nbang{a_n} F_n \lra \nbang{a} F }
  {\nbang{a_1} F_1, \ldots, \nbang{a_n} F_n \lra  F }
\]
We shall write \sells\ instead of $\sells_\Sigma$ when $\Sigma$ can be inferred by the context.
\end{definition}
Note that, for an idempotent c-semiring, this condition is the same as in the \sell\ system since $ a_1 \times_{\cA} \cdots \times_{\cA} a_n = glb(a_1,  \cdots, a_n)$. In the case of non-idempotent c-semirings, though, this condition is \emph{stronger} since $a_1\times_{\cA} \cdots \times_{\cA} a_n \leq glb(a_1,\cdots,a_n)$. 
The new rule 
is not at all ad-hoc: while $\sells$ is a smooth extension of ILL, it is a  closed
subsystem of \sell\ which is strict when non-idempotent c-semirings are considered.  
Hence $\sells$ inherits all \sell\ good properties, such as cut elimination (see the proof in~\ref{sec:cut}).
\begin{theorem}
\label{thm:sells-cut}
 $\sells$ admits cut-elimination.
\end{theorem}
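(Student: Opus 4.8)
The plan is to run the standard Gentzen-style cut-elimination argument, reducing a single application of the cut rule with cut-free premises to a cut-free derivation by a lexicographic induction on the pair consisting of the weight of the cut formula and the sum of the heights of the two premise derivations; to accommodate contraction smoothly I would first replace cut by a generalized (multi)cut rule, prove its admissibility, and recover ordinary cut as a special case. The crucial observation that organizes the whole proof is that $\sells$ and $\sell$ share every inference rule except the promotion rule, and that $\sell$ is already known to admit cut-elimination~\cite{danos93kgc}. Consequently every axiom case, every logical principal case, and every commutative (permutation) case that does not touch an instance of $\nbang{a}_{R_S}$ is literally identical to the corresponding case in the $\sell$ argument and requires no re-examination. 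The entire burden therefore falls on the reductions that create or traverse an occurrence of the modified promotion rule, and for those the only thing that can fail is that the new side condition $a \leq_{\cA} a_1 \times_{\cA} \cdots \times_{\cA} a_n$ no longer holds after the cut has been permuted.

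The central case is the one in which the cut formula $\nbang{b} B$ sits in the banged context of a promotion in the right premise and is produced principally by a promotion in the left premise. Concretely, the left derivation $\mathcal{D}_1$ ends in $\nbang{b}_{R_S}$ with conclusion $\nbang{b_1} G_1, \ldots, \nbang{b_m} G_m \lra \nbang{b} B$ and side condition $b \leq_{\cA} b_1 \times_{\cA} \cdots \times_{\cA} b_m$, while the right derivation $\mathcal{D}_2$ ends in $\nbang{a}_{R_S}$ with conclusion $\nbang{b} B, \nbang{a_1} F_1, \ldots, \nbang{a_n} F_n \lra \nbang{a} F$ and side condition $a \leq_{\cA} b \times_{\cA} a_1 \times_{\cA} \cdots \times_{\cA} a_n$. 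I would reduce this by cutting $\mathcal{D}_1$ against the premise $\nbang{b} B, \nbang{a_1} F_1, \ldots, \nbang{a_n} F_n \lra F$ of $\mathcal{D}_2$, which keeps the same cut formula but lowers the height of the right derivation, so the secondary induction hypothesis eliminates it; then I re-apply $\nbang{a}_{R_S}$ to restore $\nbang{a} F$ on the right. This last step is legal precisely because the required side condition $a \leq_{\cA} b_1 \times_{\cA} \cdots \times_{\cA} b_m \times_{\cA} a_1 \times_{\cA} \cdots \times_{\cA} a_n$ holds: from $b \leq_{\cA} b_1 \times_{\cA} \cdots \times_{\cA} b_m$ and monotonicity of $\times_{\cA}$ (S4) we obtain $b \times_{\cA} a_1 \times_{\cA} \cdots \times_{\cA} a_n \leq_{\cA} b_1 \times_{\cA} \cdots \times_{\cA} b_m \times_{\cA} a_1 \times_{\cA} \cdots \times_{\cA} a_n$, and transitivity with the original right side condition yields the requirement, using associativity and commutativity of $\times_{\cA}$ (S3) to rearrange the factors. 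When $\mathcal{D}_1$ does not introduce $\nbang{b} B$ principally, I would instead permute the cut upward into $\mathcal{D}_1$, strictly decreasing its height, until this principal configuration is reached.

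The remaining promotion cases are routine. A principal cut between $\nbang{a}_{R_S}$ on the right and the dereliction $\nbang{a}_L$ on the left reduces to a cut on the strictly smaller formula $F$, so the primary induction applies and no side condition is involved. Since the signature $\Sigma$ of Definition~\ref{def:soft-cons} takes $U = \cA$, every banged formula admits weakening and contraction; the cut against a weakening simply discards the producer derivation, while the cut against a contraction is handled in the usual way by the multicut, duplicating $\mathcal{D}_1$ and afterwards contracting its duplicated banged context back together, and here no side condition must be rebuilt because the right promotion is consumed rather than re-applied. I expect the genuinely delicate point to be exactly the promotion-against-promotion reduction described above, namely checking that the non-idempotent combination $\times_{\cA}$ behaves correctly under the permutation; once monotonicity (S4) together with associativity and commutativity (S3) of $\times_{\cA}$ is invoked, the side condition is preserved and the argument closes just as for $\sell$, showing that $\sells$ admits cut-elimination.
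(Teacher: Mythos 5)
Your proposal is correct and follows essentially the same route as the paper: the whole argument reduces to the promotion-over-promotion permutation, where the new side condition is restored by combining the two original side conditions via the c-semiring order. The paper packages exactly this algebraic step as a substitution lemma ($b \preceq a\times c$ and $a\preceq d$ imply $b\preceq d\times c$, proved from distributivity of $\times_{\cA}$ over $+_{\cA}$), which is the same fact you obtain from monotonicity (S4) plus transitivity; the dereliction, weakening and contraction cases are likewise handled as you describe.
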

We note that Theorem~\ref{thm:adeq-basic} is also valid for the
non-idempotent case as shown in Appendix~\ref{seq:adSELLS}.
Observe also that the rule $\nbang{a}_{R_S}$ above has a strong 
synchronous flavor: not only  it inherits the synchronous behavior of the bang, but
it also 
introduces a strong non-determinism on 
choosing the formulas on the left-hand-side of the sequent marked with exponentials $a_1, \ldots, a_n$. 

Finally, notice that, in \sells, the sequent $\softC{c}{0.7}, \softC{d}{0.2} \lra \softC{c \otimes d}{0.2}$ is no longer provable while $\softC{c}{0.7}, \softC{d}{0.2} \lra \softC{c \otimes d}{a}$ is provable whenever $a\leq 0.14$, as desired. 
This finer way of controlling subexponentials on the left side of sequents allows considering
more interesting spaces as signatures, and it
opens the possibility of specifying more challenging
computational systems. 



\vspace{-2mm}
\subsection{Monotonicity and level of preferences}
Let us now explain how the Soft-\ccp\ language here proposed adheres to the 
elegant properties of its predecessors. 
In \ccp\ languages, the store grows monotonically, \ie, one can easily verify by induction on the structure of $P$ that if $\conf{X}{\Gamma}{c} \redi^* \conf{X'}{\Gamma'}{c'}$ then $\exists X'. c' \entails \exists X. c$. 
In c-semiring based constraints, when two constraints are combined,  one gets a lower value of the c-semiring. In the case of constraint solving  and soft concurrent constraint programming as in \cite{DBLP:journals/tocl/BistarelliMR06}, 
this can be understood as the fact that having more constraints implies that it is ``more difficult'' to satisfy all of them. Hence we have: (i) more constraints imply a stronger store and then, more information can be deduced from it; and (ii) more constraints imply a lower level of preference in the semiring. How should we interpret these somehow contradictory ideas?

 This problem was already addressed in \cite{DBLP:journals/tocl/BistarelliMR06} where the entailment relation (that is only defined for idempotent c-semirings --see Section \ref{sec:r-work}) is defined as the inverse of the ordering of the semiring. Roughly speaking, $C$ entails $c$ iff $\bigotimes C \sqsubseteq c$ where 
$\bigotimes C$ denotes the combination ($\times_{\cA}$) of constraints in the set $C$ and $\sqsubseteq$ is the ordering induced by $\leq_{\cA}$ on constraints. 

Now let us explain how (i) and (ii) above coexists in our framework.  We note first that the sequent 
$\softC{c_1}{a_1}, \ldots, \softC{c_n}{a_n}, \softC{c}{a} \lra \softC{c}{b}$ 
is provable for any $b\leq_{\cA} a$ 
 This means that, if an agent adds the constraint $c$ with level of preference $a$, then it is possible to deduce $c$ with a preference level less or equal to $a$. That is, the store grows monotonically. Now consider the store 
$\softC{c_1}{a_1} \otimes \cdots \otimes \softC{c_n}{a_n} \otimes \softC{c}{a} \otimes \softC{c}{b}$
where $a<_{\cA}b$. In this case, $c$ can be deduced with a preference level less or equal to $b$. This also matches the monotonic behavior we want in the store: if $\softC{c}{b}$ is added first, then the agent adding $\softC{c}{a}$ is just adding ``irrelevant'' information to the store that can be weakened when needed; on the other hand, if $\softC{c}{a}$ is added first, then, adding $\softC{c}{b}$ means that $c$ is believed with a greater level of preference and the store becomes stronger. 
Consider the stores $d_1= \softC{c}{a}\otimes \softC{d}{b}$ and $d_2= \softC{c\otimes d}{b}$
where $a <_{\cA} b$. If $e \leq_{\cA} a \times_{\cA} b$, 
it is clear that $d_1 \entails \softC{c\otimes d}{e}$. Moreover,  $d_2 \entails d_1$. This shows that 
believing both $c$ and $d$ with a given preference level $b$ (i.e., $[c\otimes d]_b$) is stronger than believing  $c$ with a 
preference level $a\leq_{\cA} b$. Note that the sequent $d_2 \lra d_1$ is provable 
only 
because all atoms in the constraint system are classical -- in our example,  $d_2 = \nbang{b}(\nbang{b}{c}\otimes \nbang{b}{d})$. Finally, the store is idempotent as in \ccp\ ($c \sqcup c \equivC c$). To see that, 
notice that  $\softC{c}{a} \otimes \softC{c}{a}  \equiv \softC{c}{a} $ (regardless the idempotency of
$\times_{\cA}$).

\vspace{-2mm}
\section{Concluding Remark}
\label{sec:concluding}
\vspace{-2mm}

We have established a tight connection between Soft-\ccp\ systems and linear logic proof systems. In particular, we investigated the 
use of subexponentials in linear logic with more involved pre-orders as logical foundations for soft constraints. 
Moreover, we have also proposed a novel proof
system, \sells, giving a logical meaning to soft constraints based on non-idempotent semirings, such as probabilistic and weighted 
soft constraints.

\paragraph{\bf Related Work}\label{sec:r-work}

In \cite{DBLP:journals/tocl/BistarelliMR06} the first \ccp\ language featuring soft constraints was proposed. There, c-semiring based constraints, seen as functions mapping variable assignments into c-semiring values, are lifted to a higher-order semiring where constraints can be combined and compared. In such formalization, an entailment relation \`a la Saraswat \cite{saraswat91popl} can be defined only if the $\times_{\cA}$ operator is idempotent (see \cite[Def. 3.8, Th. 3.9]{DBLP:journals/tocl/BistarelliMR06}). In 
particular, given a set of constraints $C$, 
if $\times_{\cA}$ is non-idempotent,  $C \entails d$ does not imply that  $C \sqcup d \equiv C$. Note that
in our case, if $C \lra d$ then the equivalence 
$ (\bigotimes C \otimes d) \equiv (\bigotimes C)$ is provable   (regardless the idempotency of $\times_\cA$). Hence, our logical characterization of soft constraints as formulas in \sell\ follows closely the idea of monotonic store in \ccp. 

The language proposed in 
\cite{DBLP:journals/tocl/BistarelliMR06} allows agents to 
be \emph{guarded} by a semiring value $a\in \cA$. Hence, an agent performs an action only if  the resulting store is not \emph{weaker} than the cut level $a$.  For instance,  $\tellp{c} \lra^a P$ adds $c$ to the store and then executes $P$ if $c$ in conjunction with the current store has a level of preference greater than $a$. We could also add to our language such kind of constructs 
by modifying accordingly the SOS in order to handling $a$-guarded constructs. Nevertheless, one should be careful since the logical meaning of processes is lost (Theorem \ref{thm:adeq-basic}).  The main reason is that such constructs do not have a proof theoretically meaning: it is necessary to check the consistency of the system first, to latter add a formula to the context.

The work in  \cite{DBLP:conf/coordination/BistarelliGMS08} combines the notion of time in $\texttt{tccp}$ \cite{DBLP:journals/iandc/BoerGM00} with soft constraints. Due to Theorem \ref{thm:constraints}, a similar extension can be also done with our framework by plugin into \tcc\ \cite{DBLP:journals/jsc/SaraswatJG96}
or \texttt{tccp}    the soft constraint system in Definition \ref{def:soft-cons}.
 Moreover, due to the logic inspiration of the constraint system proposed here, it is possible to show also that timed processes manipulating soft constraints can be declaratively characterized as formulas in SELL (\cite{DBLP:conf/concur/NigamOP13}). 

A model-based (semantic) characterization of soft constraints based on c-semirings is given in \cite{DBLP:journals/heuristics/Wilson06}. To the best of our knowledge, ours is the first proof-theoretic characterization of such systems.  However, the use of more involved orders
for subexponentials is not completely new. They were used recently in different contexts, such as in Bounded Linear Logic~\cite{ghica.unp} and 
in programming languages~\cite{DBLP:conf/esop/BrunelGMZ14}. 

\vspace{-4mm}
\paragraph{\bf Future Work}
We can foresee several research directions from this work. From the point of view of proof theory, the proof system \sells\ is 
novel. We are currently investigating a focused proof system for it, which seems to be a non trivial task: the key problem is 
how to handle contraction of formulas. In fact, when contracting a formula one is no longer able to prove formulas marked with some subexponential
bang. This is different from \sell. It seems possible, however, to use the fact that subexponentials are unbounded to come up with 
a sensible focused proof system for (fragments of) \sells. 

The definition we gave for soft constraint systems is general enough to be used in different \ccp\ idioms. In particular, it is possible to define systems with spatial information where agents can \emph{believe} the same information with different levels of preferences. 
Theorem \ref{thm:adeq-basic} along with the logical characterization of spatial \ccp\ in \cite{DBLP:conf/concur/NigamOP13} may allow us to prove correct such approach. We also foresee systems where agents can update their preferences. For that, we shall need to use quantifiers over subexponentials as defined in \cite{DBLP:conf/concur/NigamOP13}. Finally, it seems that we can define our subexponentials to be linear in order to have declaratively some forms of \emph{retraction} of soft constraints.  

\paragraph{Acknowledgments} We thank Francesco Santini 
for helpful discussions. Nigam was supported by CNPq and Pimentel was supported by CNPq and CAPES.  The work of Olarte has been (partially) supported by Colciencias (Colombia), CNPq and by Digiteo and DGAR (\'Ecole Polytechnique) funds for visitors.



\begin{thebibliography}{}

\bibitem[\protect\citeauthoryear{Andreoli}{Andreoli}{1992}]{andreoli92jlc}
{\sc Andreoli, J.-M.} 1992.
\newblock Logic programming with focusing proofs in linear logic.
\newblock {\em J. Log. Comput.\/}~{\em 2,\/}~3, 297--347.

\bibitem[\protect\citeauthoryear{Bistarelli, Gabbrielli, Meo, and
  Santini}{Bistarelli
  et~al\mbox{.}}{2008}]{DBLP:conf/coordination/BistarelliGMS08}
{\sc Bistarelli, S.}, {\sc Gabbrielli, M.}, {\sc Meo, M.~C.}, {\sc and} {\sc
  Santini, F.} 2008.
\newblock Timed soft concurrent constraint programs.
\newblock In {\em COORDINATION},  LNCS, vol. 5052. Springer, 50--66.

\bibitem[\protect\citeauthoryear{Bistarelli, Montanari, and Rossi}{Bistarelli
  et~al\mbox{.}}{1997}]{DBLP:journals/jacm/BistarelliMR97}
{\sc Bistarelli, S.}, {\sc Montanari, U.}, {\sc and} {\sc Rossi, F.} 1997.
\newblock Semiring-based constraint satisfaction and optimization.
\newblock {\em J. ACM\/}~{\em 44,\/}~2, 201--236.

\bibitem[\protect\citeauthoryear{Bistarelli, Montanari, and Rossi}{Bistarelli
  et~al\mbox{.}}{2006}]{DBLP:journals/tocl/BistarelliMR06}
{\sc Bistarelli, S.}, {\sc Montanari, U.}, {\sc and} {\sc Rossi, F.} 2006.
\newblock Soft concurrent constraint programming.
\newblock {\em ACM Trans. Comput. Log.\/}~{\em 7,\/}~3, 563--589.

\bibitem[\protect\citeauthoryear{Bistarelli, Montanari, Rossi, Schiex,
  Verfaillie, and Fargier}{Bistarelli
  et~al\mbox{.}}{1999}]{DBLP:journals/constraints/BistarelliMRSVF99}
{\sc Bistarelli, S.}, {\sc Montanari, U.}, {\sc Rossi, F.}, {\sc Schiex, T.},
  {\sc Verfaillie, G.}, {\sc and} {\sc Fargier, H.} 1999.
\newblock Semiring-based csps and valued csps: Frameworks, properties, and
  comparison.
\newblock {\em Constraints\/}~{\em 4,\/}~3, 199--240.

\bibitem[\protect\citeauthoryear{Brunel, Gaboardi, Mazza, and Zdancewic}{Brunel
  et~al\mbox{.}}{2014}]{DBLP:conf/esop/BrunelGMZ14}
{\sc Brunel, A.}, {\sc Gaboardi, M.}, {\sc Mazza, D.}, {\sc and} {\sc
  Zdancewic, S.} 2014.
\newblock A core quantitative coeffect calculus.
\newblock In {\em ESOP}, LNCS, vol.
  8410. Springer, 351--370.

\bibitem[\protect\citeauthoryear{Chaudhuri}{Chaudhuri}{2010}]{DBLP:conf/csl/Chaudhuri10}
{\sc Chaudhuri, K.} 2010.
\newblock Classical and intuitionistic subexponential logics are equally
  expressive.
\newblock In {\em CSL}, LNCS, vol. 6247. Springer, 185--199.

\bibitem[\protect\citeauthoryear{Danos, Joinet, and Schellinx}{Danos
  et~al\mbox{.}}{1993}]{danos93kgc}
{\sc Danos, V.}, {\sc Joinet, J.-B.}, {\sc and} {\sc Schellinx, H.} 1993.
\newblock The structure of exponentials: Uncovering the dynamics of linear
  logic proofs.
\newblock In {\em Kurt G{\"o}del Colloq.},  LNCS, vol. 713. Springer,
  159--171.

\bibitem[\protect\citeauthoryear{de~Boer, Gabbrielli, and Meo}{de~Boer
  et~al\mbox{.}}{2000}]{DBLP:journals/iandc/BoerGM00}
{\sc de~Boer, F.~S.}, {\sc Gabbrielli, M.}, {\sc and} {\sc Meo, M.~C.} 2000.
\newblock A timed concurrent constraint language.
\newblock {\em Inf. Comput.\/}~{\em 161,\/}~1, 45--83.

\bibitem[\protect\citeauthoryear{de~Boer, Pierro, and Palamidessi}{de~Boer
  et~al\mbox{.}}{1995}]{BoerPP95}
{\sc de~Boer, F.~S.}, {\sc Pierro, A.~D.}, {\sc and} {\sc Palamidessi, C.}
  1995.
\newblock Nondeterminism and infinite computations in constraint programming.
\newblock {\em Theoretical Computer Science\/}~{\em 151,\/}~1, 37--78.

\bibitem[\protect\citeauthoryear{Fages, Ruet, and Soliman}{Fages
  et~al\mbox{.}}{2001}]{fages01ic}
{\sc Fages, F.}, {\sc Ruet, P.}, {\sc and} {\sc Soliman, S.} 2001.
\newblock Linear concurrent constraint programming: Operational and phase
  semantics.
\newblock {\em Inf. Comput.\/}~{\em 165,\/}~1, 14--41.

\bibitem[\protect\citeauthoryear{Ghica and Smith}{Ghica and
  Smith}{2013}]{ghica.unp}
{\sc Ghica, D.~R.} {\sc and} {\sc Smith, A.} 2013.
\newblock From bounded affine types to automatic timing analysis.
\newblock {\em CoRR\/}~{\em abs/1307.2473}.

\bibitem[\protect\citeauthoryear{Girard}{Girard}{1987}]{girard87tcs}
{\sc Girard, J.-Y.} 1987.
\newblock Linear logic.
\newblock {\em Theor. Comput. Sci.\/}~{\em 50}, 1--102.

\bibitem[\protect\citeauthoryear{Haemmerl{\'e}, Fages, and
  Soliman}{Haemmerl{\'e} et~al\mbox{.}}{2007}]{DBLP:conf/fsttcs/HaemmerleFS07}
{\sc Haemmerl{\'e}, R.}, {\sc Fages, F.}, {\sc and} {\sc Soliman, S.} 2007.
\newblock Closures and modules within linear logic concurrent constraint
  programming.
\newblock In {\em FSTTCS}, LNCS, vol. 4855. Springer, 544--556.

\bibitem[\protect\citeauthoryear{Knight, Palamidessi, Panangaden, and
  Valencia}{Knight et~al\mbox{.}}{2012}]{knight12concur}
{\sc Knight, S.}, {\sc Palamidessi, C.}, {\sc Panangaden, P.}, {\sc and} {\sc
  Valencia, F.~D.} 2012.
\newblock Spatial and epistemic modalities in constraint-based process calculi.
\newblock In {\em CONCUR}, LNCS, vol. 7454. Springer, 317--332.

\bibitem[\protect\citeauthoryear{Nigam and Miller}{Nigam and
  Miller}{2009}]{nigam09ppdp}
{\sc Nigam, V.} {\sc and} {\sc Miller, D.} 2009.
\newblock Algorithmic specifications in linear logic with subexponentials.
\newblock In {\em PPDP},  ACM,
  129--140.

\bibitem[\protect\citeauthoryear{Nigam, Olarte, and Pimentel}{Nigam
  et~al\mbox{.}}{2013}]{DBLP:conf/concur/NigamOP13}
{\sc Nigam, V.}, {\sc Olarte, C.}, {\sc and} {\sc Pimentel, E.} 2013.
\newblock A general proof system for modalities in concurrent constraint
  programming.
\newblock In {\em CONCUR}, LNCS, vol. 8052. Springer, 410--424.

\bibitem[\protect\citeauthoryear{Olarte, Nigam, and Pimentel}{Olarte
  et~al\mbox{.}}{2013}]{olarte13lsfa}
{\sc Olarte, C.}, {\sc Nigam, V.}, {\sc and} {\sc Pimentel, E.} 2013.
\newblock Dynamic spaces in concurrent constraint programming.
\newblock In {\em LSFA'13}. To be published in ENTCS. 

\bibitem[\protect\citeauthoryear{Olarte, Rueda, and Valencia}{Olarte
  et~al\mbox{.}}{2013}]{DBLP:journals/constraints/OlarteRV13}
{\sc Olarte, C.}, {\sc Rueda, C.}, {\sc and} {\sc Valencia, F.~D.} 2013.
\newblock Models and emerging trends of concurrent constraint programming.
\newblock {\em Constraints\/}~{\em 18,\/}~4, 535--578.

\bibitem[\protect\citeauthoryear{Rossi, van Beek, and Walsh}{Rossi
  et~al\mbox{.}}{2006}]{cp-handbook}
{\sc Rossi, F.}, {\sc van Beek, P.}, {\sc and} {\sc Walsh, T.}, Eds. 2006.
\newblock {\em Handbook of Constraint Programming}. Foundations of Artificial
  Intelligence, vol.~2.
\newblock Elsevier.

\bibitem[\protect\citeauthoryear{Saraswat, Jagadeesan, and Gupta}{Saraswat
  et~al\mbox{.}}{1996}]{DBLP:journals/jsc/SaraswatJG96}
{\sc Saraswat, V.~A.}, {\sc Jagadeesan, R.}, {\sc and} {\sc Gupta, V.} 1996.
\newblock Timed default concurrent constraint programming.
\newblock {\em J. Symb. Comput.\/}~{\em 22,\/}~5/6, 475--520.

\bibitem[\protect\citeauthoryear{Saraswat, Rinard, and Panangaden}{Saraswat
  et~al\mbox{.}}{1991}]{saraswat91popl}
{\sc Saraswat, V.~A.}, {\sc Rinard, M.~C.}, {\sc and} {\sc Panangaden, P.}
  1991.
\newblock Semantic foundations of concurrent constraint programming.
\newblock In {\em POPL}, ACM Press, 333--352.

\bibitem[\protect\citeauthoryear{Schiex, Fargier, and Verfaillie}{Schiex
  et~al\mbox{.}}{1995}]{DBLP:conf/ijcai/SchiexFV95}
{\sc Schiex, T.}, {\sc Fargier, H.}, {\sc and} {\sc Verfaillie, G.} 1995.
\newblock Valued constraint satisfaction problems: Hard and easy problems.
\newblock In {\em IJCAI (1)}. Morgan Kaufmann, 631--639.

\bibitem[\protect\citeauthoryear{Wilson}{Wilson}{2006}]{DBLP:journals/heuristics/Wilson06}
{\sc Wilson, N.} 2006.
\newblock A logic of soft constraints based on partially ordered preferences.
\newblock {\em J. Heuristics\/}~{\em 12,\/}~4-5, 241--262.

\end{thebibliography}



\newpage
\appendix
\label{sec:appendix}

\section{Adequacy Theorem}\label{sec:adequacy}
In this section we will discuss the adequacy theorem. 
We will start by proving Theorem~\ref{thm:adeq-basic} 
for the case where the framework used for the specification is  \sell\ (i,e, the underlying constraint system is built from an idempotent c-semiring). Later, in Section \ref{seq:adSELLS}, we extend this result for the \sells\ case for non-idempotent c-semirings. 
\subsection{Adequacy using \sell}\label{seq:adSELL}
Since \sell\ admits a {\em focused} system \cite{andreoli92jlc},
we can use here the same machinery developed in~\cite{DBLP:conf/concur/NigamOP13}. 

 First of all, notice that, by using simple logical equivalences (such as 
 moving the existential outwards), we can rewrite 
 the constraints to the  following shape:
 $$ c= \exists \vx.(\softC{pc_1}{a_1}\tensor \cdots \tensor \softC{pc_n}{a_n})
$$
 where $\softC{pc_1}{a_1}, \ldots, \softC{pc_n}{a_n}$ are all of the form 
 $\nbang{a_i}(\nbang{a_i} A_1 \tensor  \cdots \tensor \nbang{a_{i}} A_{mi})$ or of the form $\nbang{a_i} A$.  
 Observe that the formula above is composed only by  positive
 formulas. Thus, from the focusing discipline, whenever such a formula appears in the left-hand-side, it is 
 decomposed as illustrated by the following derivation:
 \[
  \infer[p \times \exists_L]{ \Delta, \exists \vx.(\softC{pc_1}{a_1}\tensor \cdots \tensor \softC{pc_n}{a_n}) \lra \Rscr}{
  \infer[n-1 \times \tensor_L]{ \Delta, \softC{pc_1}{a_1}\tensor \cdots \tensor \softC{pc_n}{a_n} \lra \Rscr}
  { \Delta, \softC{pc_1}{a_1}, \ldots , \softC{pc_n}{a_n} \lra \Rscr}
  }
 \]
 Next, the constraints $\softC{pc_1}{a_1}, \ldots , \softC{pc_n}{a_n}$ 
 appearing in the premise of this derivation are moved to the contexts $a_1, \ldots, a_n$, respectively. This is all done in a negative phase.
That is, focusing on $\pEnc{\tellp{c}}{}$ corresponds exactly to the operational semantics of tells: the pre-constraints in $c$ are added
to the constraint store, creating fresh names in the process.

On the other hand, if such a constraint $c$ is focused on the right, the derivation will have the shape
$$\infer=[p \times \exists_R]{\Delta\rfoc{\exists \vx.(\softC{pc_1}{a_1}\tensor \cdots \tensor \softC{pc_n}{a_n})}}
{\infer=[n-1 \times \tensor]{\Delta\rfoc{\softC{pc_1}{a_1} \tensor \cdots \tensor \softC{pc_n}{a_n}}}
{\infer{\Delta_1\rfoc{\softC{pc_1}{a_1}}}
{\deduce{{\Delta_1}_{\leq a_1}\lra pc_1 }{}} &
 \cdots 
 \quad
\infer{\Delta_n\rfoc{\softC{pc_n}{a_n}}}
{\deduce{{\Delta_n}_{\leq a_n}\lra pc_n}{}}}}
$$
where $\Delta\rfoc{c}$ represents a sequent 
with left context $\Delta$ and
focused 
on the right-hand side formula $c$. Here
${\Delta_i}_{\leq a_i}$ contains the elements of 
$\Delta_i$ whose contexts are marked with subexponentials greater or equal to $a_i$. Since $a_i$ 
and $\ufamily,\pfamily,\dfamily$ are not related, ${\Delta_i}_{\leq a_i}$ will have
only pre-constrains and non-logical axioms.
This means 
that focusing on $\pEnc{\whenp{c}{P}}{} = \nbang{\pfamily}  (c\limp \pEnc{P}{})$
corresponds to proving $c$ only from pre-constraints and non-logical axioms and moving all the other resources to proving 
$\pEnc{P}{}$.

Continuing this exercise, we can go case by case and prove that, indeed, one focus step corresponds to one operational step, hence proving Theorem~\ref{thm:adeq-basic}
with the highest level of adequacy (on derivations).

\subsection{Adequacy using \sells}\label{seq:adSELLS}
The ideas above cannot be used in order to show that the adequacy theorem also holds for \sells. The reason is that it is not trivial how to define a focused system to \sells.
Thus 
we will show  that, in the proof of constraints, no
encoded processes,
procedure calls or procedure definitions are
   used. 
This is due to the fact that $\ufamily,\pfamily,\dfamily$ are unrelated, and $\pfamily,\dfamily$ are linear.

\begin{lemma}
\label{lem:notprovable}
Assume the subexponential signature $\Sigma$ used to build Soft-\ccp. Let
$\Delta \cup \{p,c\}$ be a set of formulas, where: $\Delta$ contains the encoding of non-logical axioms and constraints;  $c$ is a constraint  and $p$ is the encoding  of a process or of a procedure call. Let  $b$ be the subexponential 
$\pfamily$ or $\dfamily$.
Then the sequents $\Delta, \nbang{b} p \lra  c$  and
$\Delta, p \lra c$ are not provable 
in $\sells_\Sigma$.
\end{lemma}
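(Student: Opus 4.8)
The claim is that in $\sells_\Sigma$, with $\Sigma$ the signature built for Soft-CCP, a sequent whose right-hand side is a constraint $c$ cannot be proved if some "process-like" formula $p$ (an encoded process, a call, or a definition) appears on the left — whether guarded by $\nbang{b}$ for $b \in \{\pfamily, \dfamily\}$ or not. The key structural facts: the subexponentials $\pfamily, \dfamily, \ufamily$ are pairwise unrelated to the semiring values $a \in \cA$ under $\preceq$, and $\pfamily, \dfamily \notin U$ (they are linear, admitting neither weakening nor contraction).

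**Approach.** Let me sketch a proof by induction on the height of a cut-free derivation (available by Theorem~\ref{thm:sells-cut}). The idea is to prove a slightly more general statement so the induction goes through: any sequent of the form $\Delta', p' \lra R$, where $\Delta'$ consists only of encoded constraints and non-logical axioms, $p'$ is a process/call/definition encoding possibly decorated with a subexponential from $\{\pfamily,\dfamily,\ufamily\}$, and $R$ is a constraint (or a subformula of a constraint arising in a subderivation), is not provable. I would track which connectives can appear: constraints are built from $\one, \otimes, \exists$, and the subexponential-banged atoms $\nbang{a}$ with $a \in \cA$, whereas the process encodings introduce $\with$, $\lolli$, $\forall$, and crucially the markers $\nbang{\pfamily}, \nbang{\dfamily}, \nbang{\ufamily}$.

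**Key steps.** First I would argue that the distinguished formula $p$ (or $\nbang{b}p$) can never be the principal formula of a rule that "consumes" it toward proving $c$. Since $b \in \{\pfamily,\dfamily\}$ is linear, it cannot be weakened away, so it must eventually be dereliction-introduced and its body decomposed; but the body's top connective is one of $\with, \lolli, \forall$ (for a process) or $\nbang{\dfamily}$ (for a call). None of these rules produce progress on the right toward a constraint $c$, and in particular the left-introduction of $\lolli$ or $\with$ either splits the context or selects a branch, in each case leaving a sequent of the same forbidden shape — so I can invoke the induction hypothesis. Second, and this is the crux, I would examine the promotion rule $\nbang{a}_{R_S}$: to introduce $\nbang{a}$ on the right (as needed to build the banged atoms inside $c$), the \emph{entire} left context must consist of formulas banged with subexponentials $a_1, \ldots, a_n$ satisfying $a \leq_\cA a_1 \times_\cA \cdots \times_\cA a_n$. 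Because $\pfamily, \dfamily$ are unrelated to every $a \in \cA$, and because $\pfamily\times_\cA a' = \bottom_c$ sits strictly below $\botA$, the presence of $\nbang{\pfamily}p$ or $\nbang{\dfamily}p$ makes this side condition unsatisfiable for any $a \in \cA$; and the linear markers cannot be discarded to clear the context. This blocks the only rule that could complete a proof of $c$. The remaining cases — where $p$ appears underelict as a bare formula whose top connective is $\with$, $\lolli$, or $\forall$ — reduce by the induction hypothesis after observing no right rule for a constraint can fire while such a formula lingers unabsorbed.

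**Main obstacle.** The delicate point is handling the promotion rule carefully in the non-idempotent setting: I must confirm that no sequence of left rules can transform $\nbang{b}p$ into something compatible with the promotion side condition, and that $\lolli_L$ does not smuggle $p$'s subformulas into a "constraint-only" subgoal while shunting $p$'s marker elsewhere. Concretely, the hard part is the bookkeeping showing that every left-rule application preserves the invariant "the context still contains a linearly-marked process formula that cannot be promoted and cannot be weakened," so that the generalized induction hypothesis always applies. Once this invariant is pinned down, the two base obstructions — linearity of $\pfamily,\dfamily$ (no weakening) and their incomparability with $\cA$ (failure of the promotion side condition) — together close every branch.
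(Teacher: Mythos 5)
Your argument is essentially the paper's: the paper also proceeds by a minimal-height/contradiction argument showing that every applicable rule either fails (the initial rule and weakening are blocked by the linearity of $\pfamily,\dfamily$; the promotion rule $\nbang{a}_{R_S}$ is blocked because $\pfamily,\dfamily$ are $\preceq$-unrelated to every $a\in\cA$) or yields a premise still containing a linearly-marked process formula, so your height induction with that invariant is the same proof. One caveat: drop $\ufamily$ from your generalized invariant --- since $\ufamily\in U$, a formula $\nbang{\ufamily}f$ \emph{can} be weakened, so the strengthened claim ``$\Delta',p'\lra R$ is unprovable when $p'$ is a $\ufamily$-marked definition'' is false (the paper treats that case in a separate lemma, showing $\Delta,\nbang{\ufamily}f\lra c$ is provable iff $\Delta\lra c$ is); the induction goes through only for the $\pfamily,\dfamily$-marked and bare linear cases.
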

\begin{proof}
The proof is by contradiction. Assume that the sequent 
$\Delta, \nbang{b} p \lra c$ (resp. 
$\Delta, p \lra  c$)
is provable and consider a proof $\pi$ of it with
smallest height. The last  rule applied in $\pi$
cannot be an initial rule, because $\nbang{b} p$ 
(resp. $p$) is linear. One possible
action is to derelict the formula $\nbang{b} p$ obtaining the 
sequent $\Delta, p \lra c$, which reduces the two 
cases to one.
Another possibility  would be applying some non logical axiom $\nbang{\top_{\cA}}(\forall\vx (d \limp e))$ in $\Delta$.
But since $d,e$ are constraints, this 
will lead to a premise with the formula $ \nbang{b} p$ (resp. $p$) in the context. 
Moreover, introducing the formula $c$
is either not possible: when $c$ is of the form $\nbang{a} pc$, 
$b$ is unrelated to $a$ (resp. the linear formula $p$ is in the context);  
or when possible, that is, when $c$'s main connective is an $\exists$ or a 
$\tensor$, then $ \nbang{b} p$ (resp. $p$) is in the context of one of the 
premises. 
Finally, we can introduce the formula $p$ if it is the encoding of a process, such as 
an ask. But again one of the  resulting premises will again contain a formula of the form $\nbang{b} p'$ in the context,
where $p'$ is the encoding of a process. 
Thus there is no such minimal proof.
\end{proof}

\begin{lemma}
Assume the subexponential signature $\Sigma$ used to build Soft-\ccp. Let
$\Delta \cup \{f,c\}$ be a set of formulas, where $\Delta$ contains
the encoding of logical axioms and constraints; $c$ is a constraint, and $\nbang{\ufamily} f$ is the encoding of a 
process definition $p(\vx)\defsymboldelta P$.
Then the sequent $\Delta, \nbang{\ufamily} f \lra c$ is provable in $\sells_\Sigma$
if and only if $\Delta \lra c$ is provable.
\end{lemma}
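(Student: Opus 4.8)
The plan is to prove the two directions separately, with essentially all the work in the ``only if'' direction. For the ``if'' direction, note that $\nbang{\ufamily} f$ is marked with a subexponential in $U$ (recall $\ufamily\in U$), so weakening is available: from a proof of $\Delta\lra c$ one obtains a proof of $\Delta,\nbang{\ufamily} f\lra c$ by a single application of the $W$ rule. By Theorem~\ref{thm:sells-cut} we may assume throughout that all derivations are cut-free, which keeps the case analysis finite.

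For the ``only if'' direction the guiding idea is that the encoding $\nbang{\ufamily} f=\nbang{\ufamily}[\forall\vx.(\nbang{\dfamily} p(\vx)\limp \pEnc{P}{})]$ of a process definition can never be genuinely \emph{applied} while deriving a constraint, so the only thing a proof can do with it is carry it passively and eventually weaken it. I would make this precise by induction on the height of a cut-free proof $\pi$ of $\Delta,\nbang{\ufamily} f\lra c$, proving a slightly more general claim that allows the context to contain an arbitrary finite multiset of process-definition encodings together with their $\forall$- and $\limp$-reducts. This generalization is what keeps the cases for \emph{contraction} on $\nbang{\ufamily} f$ (which duplicates it) and for \emph{dereliction} (which leaves the bare body $\forall\vx.(\cdots)$ in the context) inside the scope of the induction hypothesis. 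Every case in which the last rule of $\pi$ acts on a formula of $\Delta$, or weakens/contracts/derelicts one of these definition formulas, is routine: one applies the induction hypothesis to the premise(s) and reinstalls the rule, using that every formula of $\Delta$ (axioms marked $\topA$ and constraints marked with $\cA$-values, all lying in $U$) may itself be freely weakened and contracted.

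The one non-routine case, and the main obstacle, is when a definition body is decomposed all the way down: derelicted, instantiated by $\forall_L$ to $\nbang{\dfamily} p(t)\limp \pEnc{P}{}[t/\vx]$, and then made principal by a $\lolli_L$ inference splitting the context as $\Gamma_1,\Gamma_2$ with premises $\Gamma_1\lra\nbang{\dfamily} p(t)$ and $\Gamma_2,\pEnc{P}{}[t/\vx]\lra c$. Here I would show that the \emph{left} premise is unprovable, which already refutes the applicability of the $\lolli_L$ step, since both premises are required. The argument is exactly the one underlying Lemma~\ref{lem:notprovable}: the succedent $\nbang{\dfamily} p(t)$ can only be introduced by the promotion rule $\nbang{\dfamily}_{R_S}$, whose side condition in $\sells$ demands that every formula of $\Gamma_1$ carry a subexponential $a_i$ with $\dfamily\leqA a_1\times_\cA\cdots\times_\cA a_n$. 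But every subexponential occurring in $\Gamma_1$ is an $\cA$-value or $\ufamily$, each of which is \emph{unrelated} to $\dfamily$ in $\preceq$; and since $\times_\cA$ is intensive the product lies below each factor, so $\dfamily\leqA$ the product would force $\dfamily\leqA a_i$ for every $a_i$, contradicting incomparability. When $\Gamma_1$ is empty the rule reduces the goal to the bare atom $p(t)$, which no initial sequent can close because the procedure name $p$ does not occur in any constraint or axiom; the same observation rules out any other way of producing $\nbang{\dfamily} p(t)$.

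Concluding: since this last case is impossible, in any cut-free proof the formula $\nbang{\ufamily} f$ (and each of its reducts introduced along the induction) is never the principal formula of a productive inference, so tracing the induction back shows every such formula is ultimately discarded by weakening, and the induction hypothesis delivers a proof of $\Delta\lra c$. The two delicate points to handle carefully in the full write-up are (i) the bookkeeping of the generalized induction invariant under contraction and dereliction, and (ii) the $\preceq$-computation after the completion of $\Sigma'$ to a c-semiring, ensuring that $\dfamily$ (and, symmetrically, $\pfamily$) remains incomparable to all $\cA$-values so that the promotion side condition genuinely blocks.
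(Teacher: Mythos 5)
Your proof is correct and follows essentially the same route as the paper's: the backward direction is a single weakening of $\nbang{\ufamily} f$, and the forward direction shows that the definition encoding can never be productively used, since dereliction exposes a linear body whose only consumption via $\lolli_L$ demands proving $\Gamma_1 \lra \nbang{\dfamily} p(t)$, blocked by the incomparability of $\dfamily$ with the remaining subexponentials --- exactly the reasoning of Lemma~\ref{lem:notprovable} that the paper invokes. Your explicit height induction with a generalized invariant closed under contraction and dereliction is a more careful rendering of the same argument, not a different one.
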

\begin{proof}
The $(\Leftarrow)$ direction is straightforward as one only needs to weaken  
$\nbang{\ufamily} f$. 

The $(\Rightarrow)$ direction is as follows. The only way to prove the sequent 
$\Delta, \nbang{\ufamily} f \lra c$ is by weakening $\nbang{\ufamily} f$. 
As in the proof of Lemma~\ref{lem:notprovable}, either we cannot introduce $c$
or when it is introduced the formula $\nbang{\ufamily} f$ still appears in the context of the premise.
Moreover, we cannot derelict $\nbang{\ufamily} f$, because the resulting sequent would
contain a linear formula and using the same reasoning in Lemma~\ref{lem:notprovable}
we can show that this resulting sequent is not provable. Contracting $\nbang{\ufamily} f$
also does not help in the proof, as the new occurrence of $\nbang{\ufamily} f$ would 
also need to be weakened.
\end{proof}

Hence even without using focusing in order to control the flow of the proof, we have a neat way of controlling
its shape, using the subexponential structure and linearity.

\section{Cut-elimination for \sells}\label{sec:cut}
We prove now Theorem~\ref{thm:sells-cut}. 
We shall omit the subindex ``$\cA$'' in $\times_{\cA}$ and $+_{\cA}$ since in this context it is clear that   $\times$ and $+$ refer to the operands of the c-semiring. 

We start by proving the following result, which 
is a substitution lemma for $\preceq$.  
\begin{lemma}
\label{lemma:subst}
 Let $\Sigma$ be a subexponential signature constructed
 on a c-semiring. Then if $b \preceq a\times c$ and 
 $a \preceq d$, then 
 $b \preceq d \times c$.
\end{lemma}

\begin{proof}
Let's assume that $b \preceq a\times c$  and $a \preceq d$. We prove $b \preceq c\times d$.
Recall that $x \preceq y$ if $x + y = y$ (by definition). Then
$b \preceq a\times c$ iff  $b+a\times c = a\times c$ and
$a\preceq d$  iff   $a+d = d$.
By c-semiring properties, $\times$ distributes on $+$. Then, multiplying $c$ on $a+d = d$ we get
$c\times (a+d) = a\times c + c\times d = c\times d$. Hence, $a\times c \preceq c\times d$. 
By using the fact that $b\preceq a\times c$, we conclude $b \preceq c\times d$.
\end{proof}

\paragraph{Proof of Theorem~\ref{thm:sells-cut}}
We first show that Cut permutes over the promotion rule as shown below:
\[
 \infer[Cut]{\nbang{a_1} F_1, \ldots, \nbang{a_n} F_n, \nbang{d_1} G_1, \ldots, \nbang{d_m} G_m \lra \nbang{b} F}
 {
 \infer[\nbang{a}_{R_S}]{\nbang{a_1} F_1, \ldots, \nbang{a_n} F_n \lra \nbang{a} G}
 {\nbang{a_1} F_1, \ldots, \nbang{a_n} F_n \lra G}
 &
 \infer[\nbang{b}_{R_S}]{\nbang{d_1} G_1, \ldots, \nbang{d_m} G_m, \nbang{a} G \lra \nbang{b} F}
 {\nbang{d_1} G_1, \ldots, \nbang{d_m} G_m, \nbang{a} G \lra F}
 }
\qquad \rightsquigarrow\qquad
\]
\[
 \infer[\nbang{b}_{R_S}]{\nbang{a_1} F_1, \ldots, \nbang{a_n} F_n, \nbang{d_1} G_1, \ldots, \nbang{d_m} G_m \lra \nbang{b} F}
 {
 \infer[Cut]{\nbang{a_1} F_1, \ldots, \nbang{a_n} F_n, \nbang{d_1} G_1, \ldots, \nbang{d_m} G_m \lra F}
 {\infer[\nbang{a}_{R_S}]{\nbang{a_1} F_1, \ldots, \nbang{a_n} F_n \lra \nbang{a} G}{\nbang{a_1} F_1, \ldots, \nbang{a_n} F_n \lra  G} 
 & \nbang{d_1} G_1, \ldots, \nbang{d_m} G_m, \nbang{a} G \lra F }
 }
\]
The derivation above is possible since, from the left premise of the first derivation, $a \preceq a_1 \times \cdots \times a_n$
and, from the right premise of the same derivation, $b \preceq a \times d_1 \times \cdots \times d_m$. Thus from the Lemma~\ref{lemma:subst}, 
we have that  $b \preceq a_1 \times \cdots \times a_n \times d_1 \times \cdots \times d_m$, \ie,  the last $\nbang{b}$ can be introduced.

For the rest of the cases, the proof is similar to \sell. The more interesting cases are:
\begin{itemize}
\item Promotion + dereliction
\[
 \infer[Cut]{\Gamma, \Delta \lra F}
 {
 \infer[\nbang{a}_{R_S}]{\Gamma \lra \nbang{a} G}{\Gamma \lra G}
 &
 \infer[\nbang{a}_{L}]{\Delta, \nbang{a} G \lra F}{\Delta, G \lra F}
 }
 \qquad \rightsquigarrow \qquad 
  \infer[Cut]{\Gamma, \Delta \lra F}
 {
{\Gamma \lra G}
 &
 {\Delta, G \lra F}
 }
\]

\item Promotion + weakening
\[
 \infer[Cut]{\Gamma, \Delta \lra F}
 {
 \infer[\nbang{a}_{R_S}]{\Gamma \lra \nbang{a} G}{\Gamma \lra G}
 &
 \infer[\nbang{a}_{L}]{\Delta, \nbang{a} G \lra F}{\Delta \lra F}
 }
 \qquad \rightsquigarrow \qquad 
  \infer=[W]{\Gamma, \Delta \lra F}
 {
 {\Delta \lra F}
 }
\]
We can weaken $\Gamma$ since applying the $\nbang{a}_{R_S}$ rule in the left premise forces
$\Gamma$ to have the shape $\nbang{a_1} F_1, \ldots, \nbang{a_n} F_n$, with $a\preceq a_1\times\ldots\times a_n$. 
On the other hand, from the right-premise, $a \in U$, \ie, 
formulas of the form $\nbang{a} F$ are allowed to contract and weaken. Since $U$ is upwardly closed with respect to $\preceq$, we also have $a_1, \ldots, a_n \in U$. 
Thus  $\nbang{a_1} F_1, \ldots, \nbang{a_n} F_n$
can also be weakened.

\item Promotion + contraction

\resizebox{\textwidth}{!}{
$
 \infer[Cut]{\Gamma, \Delta \lra F}
 {
 \infer[\nbang{a}_{R_S}]{\Gamma \lra \nbang{a} G}{\Gamma \lra G}
 &
 \infer[\nbang{a}_{L}]{\Delta, \nbang{a} G \lra F}{\Delta, \nbang{a} G, \nbang{a} G \lra F}
 }
 \qquad \rightsquigarrow \qquad 
  \infer=[C]{\Gamma, \Delta \lra F}
 {
 \infer[Cut]{\Gamma, \Gamma, \Delta \lra F}
 {
 \infer[\nbang{a}_{R_S}]{\Gamma \lra \nbang{a} G}{\Gamma \lra G}
 &
 \infer[Cut]{\Delta, \Gamma, \nbang{a} G \lra F}{
 \infer[\nbang{a}_{R_S}]{\Gamma \lra \nbang{a} G}{\Gamma \lra G}
 &
 {\Delta, \nbang{a} G, \nbang{a} G \lra F}
 }
 } 
 }
$}

\item  When Cut permutes over structural rules. 

\[
 \infer[Cut]{\nbang{a} H, \Gamma, \Delta \lra F}
 {
 \infer[C]{\nbang{a} H, \Gamma \lra G}{\nbang{a} H, \nbang{a} H, \Gamma \lra G}
 &
 {\Delta, G \lra F}
 }
 \qquad \rightsquigarrow \qquad 
  \infer[C]{\nbang{a} H, \Gamma, \Delta \lra F}
 {
 \infer[Cut]{\nbang{a} H, \nbang{a} H, \Gamma, \Delta \lra F}
 {
 {\nbang{a} H, \nbang{a} H, \Gamma \lra G}
 &
 {\Delta, G \lra F}
 }
 } 
\]

\[
 \infer[Cut]{\nbang{a} H, \Gamma, \Delta \lra F}
 {
 \infer[W]{\nbang{a} H, \Gamma \lra G}{ \Gamma \lra G}
 &
 {\Delta, G \lra F}
 }
 \qquad \rightsquigarrow \qquad 
  \infer[W]{\nbang{a} H, \Gamma, \Delta \lra F}
 {
 \infer[Cut]{\Gamma, \Delta \lra F}
 {
 { \Gamma \lra G}
 &
 {\Gamma, G \lra F}
 }
 } 
\]

\item  Some other principal cases

 \resizebox{\textwidth}{!}{
$
 \infer[Cut]{\Gamma_1, \Gamma_2, \Delta \lra F}
 {
 \infer[\tensor_R]{\Gamma_1, \Gamma_2 \lra A \tensor B}{ \Gamma_1 \lra A & \Gamma_2 \lra B}
 &
 \infer[\tensor_L]{\Delta, A \tensor B \lra F}{\Delta, A, B \lra F}
 }
 \qquad \rightsquigarrow \qquad 
  \infer[Cut]{\Gamma_1, \Gamma_2, \Delta \lra F}
 {
 {\Gamma_1 \lra A}
 &
 \infer[Cut]{ \Gamma_2, \Delta, A \lra F}
 {
 {\Gamma_2 \lra B}
 &
 { \Delta, A, B \lra F}
 }
 }
$
}

\[
 \infer[Cut]{\Gamma, \Delta \lra F}
 {
 \infer[\tensor_R]{\Gamma \lra A \with B}{ \Gamma \lra A & \Gamma \lra B}
 &
 \infer[\with_L]{\Delta, A \with B \lra F}{\Delta, A \lra F}
 }
 \qquad \rightsquigarrow \qquad 
  \infer[Cut]{\Gamma, \Delta \lra F}
 {
 {\Gamma \lra A}
 &
 { \Delta, A \lra F}
 }
\]

\[
 \infer[Cut]{\Gamma, \Delta \lra F}
 {
 \infer[\exists_R]{\Gamma \lra \exists x. G}{\deduce{\Gamma \lra G[t/x]}{\Xi_1}}
 &
 \infer[\exists_L]{\Delta, \exists x. G \lra F}{\deduce{\Delta, G[e/x] \lra F}{\Xi_2}}
 }
 \qquad \rightsquigarrow \qquad 
  \infer[Cut]{\Gamma, \Delta \lra F}
 {
 {\deduce{\Gamma \lra G[t/x]}{\Xi_1}}
 &
 {\deduce{\Delta, G[t/x] \lra F}{\Xi_2[t/e]}}
 }
\]
The proof of the right premise of the right figure, $\Xi_2[t/e]$ is a \sells\ proof using the 
usual eigenvariable argument. This can be proved by induction on the height of proofs.

\end{itemize}

\section{Constraint systems as cylindric algebras}
\label{sec:cylindric}
We shall now recall the  abstract and general definition of constraint systems as cylindric algebras as in \cite{BoerPP95}.

\begin{definition}[Constraint System] \label{def:cs}
A cylindric constraint system  is a structure 
 $
{\bf C} = \langle \cC,\leq,\sqcup,\one,\zero,{\it Var}, \Exists,
D \rangle
$ such that:
\\\noindent{-}
 $\langle \cC,\leq,\sqcup,\one,\zero \rangle$ is a lattice
with $\sqcup$ the $\lub$ operation (representing the logical
\emph{and}), and $\one$, $\zero$ the least and the greatest
elements in $\cC$ respectively (representing $\texttt{true}$ and
$\texttt{false}$). Elements in $\cC$ are called \emph{constraints}
with typical elements $c,c',d,d'...$. If $c\leq d$ and $d\leq c$ we write $c \equivC d$. If $c\leq d$ and $c\not\equivC d$, we write $c<d$. 
\\\noindent{-}${\it Var}$ is a denumerable set of variables.
\\\noindent{-}For each
$x\in {\it Var}$ the function $\Exists x: \cC \to \cC$ is a
cylindrification operator satisfying:
 (E1) $\Exists x (c) \leq c$;  
 (E2) If $c\leq d$ then $\Exists x (c) \leq \Exists x (d)$;
 (E3) $\Exists x(c \sqcup \Exists x (d)) \equivC \Exists x(c) \sqcup \Exists x(d)$;
 (E4) $\Exists x\Exists y(c) \equivC \Exists y\Exists x (c)$.
\\\noindent{-} For each $x,y \in {\it Var}$, the constraint $d_{xy} \in D$ is a
\emph{diagonal element} and it satisfies:
 (D1) $d_{xx} \equivC \one$;
 (D2) If $z$ is different from $x,y$ then $d_{xy} \equivC \Exists z(d_{xz}
\sqcup d_{zy})$;
 (D3) If $x$ is different from $y$ then $c \leq d_{xy} \sqcup
\Exists x(c\sqcup d_{xy})$.
\\\noindent{-} We say that $d$ entails $c$, notation $d \entails c$,  iff $c\leq d$. 
\end{definition}

The cylindrification operators model a sort of existential
quantification, helpful for hiding information. 
 Properties (E1) to (E4) are standard.
 
 The diagonal element $d_{xy}$ can be thought of as the equality $x=y$. Properties (D1) to (D3) are standard and  they allow the definition of  substitutions of the form $[y/x]$  required, for instance, to represent  the substitution of formal and actual parameters in procedure calls. 
 By using these properties, it is easy to prove that $c[y/x]
 \equivC\Exists x.(c \sqcup d_{xy})$,  where $c[y/x]$
  represents abstractly the constraint obtained from $c$ by
replacing the variables $x$ by $y$. As it is customary, we shall assume that the constraint system under consideration contains an equality
theory. Hence,  we shall use indistinguishably the notation $d_{xy}$ and $x = y$ to denote diagonal elements. 
 
\begin{theorem}[Constraint System]
Let  $\mathbb{C} = \langle\cA,  \cC, \entails \rangle$ be as in Definition~\ref{def:soft-cons}. Then, the structure $\langle \cC,\leq,\otimes,\one,\zero, Var, \exists, D \rangle $ is   a cylindric constraint system where $D = \{\nbang{\top_{\cA}}(x=y) \mid  x,y\in Var\}$  and $c\leq d$ iff $d\entails c$. 
\end{theorem}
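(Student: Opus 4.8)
The plan is to verify, clause by clause, that the proposed structure meets every requirement of Definition~\ref{def:cs}, translating each lattice-theoretic condition into a provability statement in $\sell$ and discharging it with the logical rules of Figure~\ref{fig:ll} together with the structural rules. Two global facts drive the whole argument. First, since $\Sigma = \tup{\cA,\leqA,\cA}$ takes $U=\cA$, every subexponential occurring in a constraint is unbounded, so weakening and contraction are available on each banged atom. Second, $\entails$ is reflexive by the initial rule and transitive by the admissibility of cut in $\sell$ (cf.~\cite{danos93kgc}); hence the relation $c\leq d$ iff $d\entails c$ is a pre-order, and a partial order once we quotient by $\equivC$.

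For the lattice $\langle\cC,\leq,\otimes,\one,\zero\rangle$ I would first establish that $\otimes$ is the least upper bound: the sequents $c\otimes d\lra c$ and $c\otimes d\lra d$ are provable by decomposing with the $\exists$- and $\otimes$-left rules and weakening the unused banged atoms, giving $c,d\leq c\otimes d$; conversely, if $e\entails c$ and $e\entails d$, then $e\entails c\otimes d$ by contracting $e$ and applying $\otimes_R$, which yields the universal property. Commutativity, associativity, and idempotency of $\otimes$ as a join follow from the corresponding logical equivalences. That $\one$ is the least element amounts to $d\lra\one$ for every $d$, obtained by weakening $d$ entirely and closing with $\one_R$; that $\zero$ is the greatest element is the left rule for $\zero$, which proves $\zero\lra c$ for every $c$.

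Next I would treat the cylindrification operator, taking $\exists x(\cdot)$ to be the existential quantifier. Property (E1) is $c\lra\exists x.c$ via $\exists_R$ with witness $x$; (E2) follows by prefixing a proof of $d\lra c$ with the $\exists$-left/right rules under the eigenvariable condition; (E3) is the standard commutation $\exists x.(c\otimes\exists x.d)\equivC\exists x.c\otimes\exists x.d$, provable in both directions because $x$ is not free in $\exists x.d$; and (E4) is the routine permutation of two existentials. The diagonal elements $d_{xy}=\nbang{\top_{\cA}}(x=y)$ are then handled using the assumed equality theory: (D1) $d_{xx}\equivC\one$ from reflexivity plus weakening, (D2) from transitivity of equality together with the $\exists$ rules, and (D3), the substitution property $c\leq d_{xy}\otimes\exists x.(c\otimes d_{xy})$, from the substitution-of-equals axioms. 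The marking by $\top_{\cA}$ is what makes this smooth: since $\top_{\cA}$ is the top of the c-semiring, the promotion side condition $\top_{\cA}\preceq\cdots$ is vacuous, so a diagonal can be promoted to, and used at, any preference level (the sequent $\nbang{\top_{\cA}}(x=y)\lra\nbang{a}(x=y)$ is provable for all $a$), and the equality reasoning is never obstructed by subexponential bookkeeping.

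I expect the main obstacle to be (D3) together with the subexponential accounting it forces: one must check that, after combining a diagonal at level $\top_{\cA}$ with an arbitrary constraint and cylindrifying, the residual equality information remains available to rewrite $c$, and that no promotion side condition blocks the step. Because $U=\cA$ guarantees weakening and contraction everywhere, and $\top_{\cA}$ sits above every label so that promotion is unrestricted, this reduces to the classical cylindric-algebra argument of~\cite{BoerPP95}; the remaining clauses are routine once transitivity of $\entails$ via cut-elimination is in hand.
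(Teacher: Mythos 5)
Your proposal is correct and follows essentially the same route as the paper's own proof: verify the lattice axioms for $\langle\cC,\leq,\otimes,\one,\zero\rangle$ using the fact that all subexponentials in constraints are unbounded (so weakening and contraction are admissible for compound constraints), check (E1)--(E4) directly with the quantifier rules, and discharge (D1)--(D3) via the equality theory in $\Delta$ together with the $\top_{\cA}$ marking of diagonal elements. One small notational slip: the relevant vacuous side condition is $a \preceq \top_{\cA}$ (promotion \emph{from} a context marked $\top_{\cA}$ down to any level $a$), not $\top_{\cA} \preceq \cdots$, but your parenthetical sequent $\nbang{\top_{\cA}}(x=y) \lra \nbang{a}(x=y)$ shows you intend exactly the claim the paper uses.
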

\begin{proof}
Recall that $c\leq d$ iff the sequent  $\nbang{\top_A}{\delta_1},..., \nbang{\top_A}{\delta_n}, d \lra c$ is provable in SELL
where $\delta_i$ is an axioms in $\Delta$ (see Definition \ref{def:soft-cons}). Abusing of the notation, we shall write sequents as the one above as $\nbang{\top_A}\Delta,d \lra c$. 

Properties (E1) to (E4) of $\Exists$ (interpreted as $\exists$) are easy. 

Note that the constraint system contains an equality theory and then,   $\Delta$ define the meaning of ``$=$''. Observe also that diagonal elements are marked with the largest subexponential $\top_{\cA}$ (which is unbounded). Then, it is easy to see that the following sequents are provable:
$\nbang{\top_{\cA}}\Delta \lra \nbang{\top_{\cA}}(x=x) \equiv \one $;
$\nbang{\top_A}\Delta \lra \nbang{\top_{\cA}}(x=y) \equiv \exists z.(\nbang{\top_{\cA}}(x=z) \otimes \nbang{\top_{\cA}}(z=y))$ whenever $z$ is different from $x$ and $y$; and $\nbang{\top_{\cA}}\Delta, \nbang{\top_{\cA}}(x=y), \exists x.(c \otimes \nbang{\top_{\cA}}(x=y)) \lra c$ if $x$ is different from $y$. Then, properties $(D1)$ to $(D3)$ hold. 

Finally, we note that according to Definition \ref{def:soft-cons},  every constraint $c$ is a \emph{classical} formula. Then it follows that  for any $c,d$, the sequents $\nbang{\top_{\cA}}{\Delta}, c \lra \one$, $\nbang{\top_{\cA}}{\Delta}, \zero \lra$ $c$ and
$\nbang{\top_{\cA}}\Delta, c,d \lra c$ are also provable. This shows that indeed $\langle \cC,\leq,\otimes,\one,\zero \rangle$ is a lattice where $\otimes$ is the lub and $\one$ (resp. $\zero$) the least (resp. greatest) element. 
\end{proof}

\end{document}